\documentclass[review,11pt,nonatbib]{elsarticle}

\usepackage[a4paper,margin=2.5cm]{geometry}
\usepackage{amsmath,amssymb,amsthm}
\usepackage{mathtools}
\usepackage{enumitem}
\usepackage{booktabs}
\usepackage{tikz}
\usetikzlibrary{arrows.meta,positioning}
\usepackage[natbibapa]{apacite}   
\renewcommand{\doiprefix}{https://doi.org/}

\usepackage[hidelinks]{hyperref}

\newcommand{\pipes}{\mathcal{P}}
\newcommand{\unitval}{v}
\newtheorem{theorem}{Theorem}
\newtheorem{proposition}[theorem]{Proposition}
\newtheorem{lemma}[theorem]{Lemma}
\newtheorem{corollary}[theorem]{Corollary}
\theoremstyle{definition}
\newtheorem{definition}[theorem]{Definition}
\newtheorem{example}[theorem]{Example}

\theoremstyle{remark}
\newtheorem{remark}[theorem]{Remark}

\newcommand{\R}{\mathbb{R}}
\newcommand{\Rp}{\mathbb{R}_{\ge 0}}
\newcommand{\Rr}{\mathcal{R}}
\newcommand{\wval}{w}

\makeatletter
\def\ps@pprintTitle{%
 \let\@oddhead\@empty
 \let\@evenhead\@empty
 \def\@oddfoot{\hfill\thepage}%
 \def\@evenfoot{\thepage\hfill}%
}
\makeatother

\begin{document}

\begin{frontmatter}

\title{Auction-based allocation of routes in capacity-constrained networks}

\author{Gergely Cs\'{a}ji}

\address{HUN-REN Institute for Computer Science and Control,\\  Kende u. 13-17., H-1111 Budapest;

Budapest University of Technology and Economics,\\  Műegyetem rkp. 3., H-1111, Budapest\\ \emph{csajigergely@sztaki.hu}}

\author{D\'{a}vid Csercsik \corref{cor1}}

\address{ELTE Centre for Economic and Regional Studies, Institute of Economics,\\ T\'oth K\'alm\'an u. 4., H-1097 Budapest\\  Tel.: +36-1 309 26 52  Fax: +36-1 319 31 36 \\
P\'{a}zm\'{a}ny P\'{e}ter Catholic University, Faculty of Information Technology and Bionics, Pr\'{a}ter~u. 50/A 1083 Budapest, Hungary \\
              Tel.: +36-1 886 47 00
              Fax: +36-1 886 47 24\\
              \emph{csercsik.david@krtk.elte.hu}}

\cortext[cor1]{Corresponding author}

\begin{abstract}
We study the route-set allocation problem between end nodes in capacity-constrained flow networks with transfer costs. We introduce an
\emph{origin--destination} (OD) bidding language, in which a participant bids
for bounded conic combinations of inlet--outlet--quantity licences and the
mechanism performs the routing through a network-flow optimising winner-determination
problem. We then study payment rules for the OD mechanism. Under a
\emph{restricted} model with openly and uniformly priced sources and
bidders who are indifferent to the source of the gas, the
Vickrey--Clarke--Groves (VCG) payment is core-stable; the key structural fact
is that in this regime some optimal aggregate flows never use a physical pipe
in both directions, so the winner-determination problem reduces to a single-commodity
network flow whose value is submodular in the set of active bidders. If
source access is allowed to depend on the individual consumer -- a realistic
feature, e.g.\ when gas deals are embedded in bilateral political
bargaining -- the reduction fails and VCG payments may lie strictly outside
the core. For this general model we propose a \emph{dual (nodal) pricing} rule
derived from the shadow prices of the flow-conservation constraints, and prove
it is always core-stable, at the cost of dominant-strategy incentive
compatibility. Illustrative examples accompany each result.
\end{abstract}

\begin{keyword}
Auctions/bidding \sep Networks \sep OR in energy \sep Mechanism design \sep
Core stability
\end{keyword}

\end{frontmatter}

\section{Introduction}\label{sec:intro}

\subsection{Motivation and background}

The allocation of scarce transport capacity in a network, where the medium is steerable -- gas pipelines, freight corridors, data links -- is a mechanism
design problem in which the goods (edge capacities) are complements: a participant who is required to ship goods to their desired destination
values a capacity product only as part of a complete route from a source to
its consumption point. The same structure -- endpoint demand over a shared,
capacitated network -- underlies auction designs for bandwidth and inter-domain
traffic \citep{Belzarena2009,MailleTuffin2007} and for railway
paths \citep{Stojadinovic2019}, as discussed in Section~\ref{sec:related}.
The current European practice for natural gas pipelines allocates capacity
through simultaneous ascending clock auctions on the PRISMA platform, a design
whose shortcomings under a dominant supplier are well
documented \citep{TakacsneToth2017}.

As an alternative, the concept of \emph{convex combinatorial
auction} (CCA) has been proposed \citep{Csercsik2022}, which exploits the divisibility of pipeline capacity: bidders
submit route--quantity bids, and the clearing assigns each participant a convex
combination of its submitted bids by solving a single linear program, with
payments determined by the Vickrey--Clarke--Groves (VCG) principle. Relative to
the ascending clock auction, the CCA allocates only usable capacity, attains higher
resulting utility, allows a more efficient utilisation of the available infrastructure and reduces the bidding burden.

The CCA framework is characterised by one important structural limitation that motivates the
present paper. Each participant must determine the set of
routes on which it is willing to bid in advance; the mechanism then optimises only over
convex combinations of these predefined routes. In the example networks
of \citet{Csercsik2022} the route set is fixed a priori, and the paper itself
lists ``the routes considered in the auction'' as an exogenous input and flags
their determination as future work. Because the number of source--sink routes
in a network grows exponentially with its size, complete enumeration is
infeasible for realistic instances, and any incomplete enumeration can, as we
show, induce inefficient results in terms of welfare.

\subsection{Contributions}

The paper is organised around two models of source access, treated in two main
parts.

\begin{enumerate}[leftmargin=1.6em]
\item \textbf{The general model (Section~\ref{sec:general}).} When source access or price may depend on the consumer, bidders evaluate each source differently. Accordingly, we introduce an origin--destination (OD) bidding
language of (inlet, outlet, quantity) licences, with the WDP -- a
network-flow linear program -- performing the routing; this removes the route
enumeration of the convex combinatorial auction of \citet{Csercsik2022}, whose
fixed route sets can be strictly suboptimal (Section~\ref{subsec:inefficiency},
Example~\ref{ex:diamond}). In this model, VCG payments can lie strictly outside
the core (Example~\ref{ex:series}). More generally, such a failure requires
a departure from source indifference; dedicated access and inlet-specific
quantity caps are two ways this can occur (Proposition~\ref{prop:necessity}). We therefore propose a
\emph{dual (nodal) pricing} rule read off the flow-conservation shadow prices and
prove it is always core-stable (Theorem~\ref{thm:dual-core}), at the cost of
dominant-strategy incentive compatibility -- an unavoidable trade-off -- but
core-selecting in the spirit of \citet{DayMilgrom2008}.

\item \textbf{The restricted model (Section~\ref{sec:restricted}).} Sources are
openly and uniformly priced and the good is homogeneous, so a consumer is
indifferent to \emph{which} source supplies it. A bidder then submits atomic
\emph{quantity--price} licences describing its inverse demand at its own sink,
and the winner-determination problem (WDP) is a single-commodity min-cost flow.
We prove that the coalition value is submodular in the set of bidders, so the
Vickrey--Clarke--Groves (VCG) payment is core-stable
(Proposition~\ref{prop:vcg-core}). The proof is elementary: an exchange argument
(Lemma~\ref{lem:noopposing}) shows optimal flows can be chosen such that they do not oppose on any edges, and a linear-programming (LP)
duality argument on the node potentials (Lemma~\ref{lem:submod}) yields
submodularity directly. We also show that this positive result genuinely needs
source indifference, and inlet-specific quantity caps may undermine
submodularity (Examples~\ref{ex:oppose}--\ref{ex:submodfail}).

\end{enumerate}

The remainder of the paper is organised as follows.
Section~\ref{sec:model} introduces the network model and
Section~\ref{sec:prelim} the auction-theoretic preliminaries (core and VCG).
Section~\ref{sec:general} formulates the OD bidding language and the general
winner-determination problem. Section~\ref{sec:restricted} studies the
restricted model and establishes the core stability of VCG.
Section~\ref{sec:general-results} compares the OD mechanism with route
enumeration, exhibits failures of VCG beyond source indifference, and
introduces dual pricing. Section~\ref{sec:discussion} discusses incentives,
counter-directed flows and the scope of the results, and
Section~\ref{sec:conclusion} concludes.

\subsection{Related work}\label{sec:related}

Auction-based allocation of divisible capacity on a network, with endpoint (rather than link-by-link) bidding and prices derived from the optimisation
duals, has precedent outside the natural gas literature. The bandwidth-auction design of \citet{Belzarena2009} can be regarded as closest to the present
mechanism: users place a single bid for an end-to-end connection ``oblivious to the internal topology'', the operator solves a capacity-constrained allocation, and per-route prices are read off the Lagrange multipliers of the link constraints.
This approach structurally aligns the endpoint bidding language and dual pricing proposed in the current article; the underlying idea of pricing shared network capacity by the shadow prices (Lagrange multipliers) of the link constraints is the network-utility-maximisation tradition of \citet{kelly1998rate}. Two features separate their
setting from ours and locate our contribution. First, the good considered in \citet{Belzarena2009} is
indivisible (a circuit), so their allocation is an integer program whose convex
relaxation is \emph{not} exact. Rounding must be applied, and no exact dual price exists.
In our model capacity is divisible, the WDP is an exact linear program, and its
duals are genuine competitive-equilibrium prices -- which is precisely what
makes the core-stability guarantee available. Second, they adopt a first-price
rule and explicitly decline VCG, observing that on a scarce link ``the
generalized Vickrey auction would charge a price equal to the highest bid left
out $\ldots$ hence the network receives zero revenue''. This phenomenon is related to the generally known low-revenue
pathology of VCG, which may imply that the VCG-induced outputs may lie outside the core \citep{DayMilgrom2008}.
On the other hand, the application of a first-price rule (i.e.\ a ``pay-as-bid'' approach) has straightforward drawbacks related to strategic bidding \citep{tierney2008uniform}.
\citet{Belzarena2009} furthermore argue that computing the VCG prices is demanding, since an integer program must be solved in the case of each player.
Scepticism about VCG for network pricing is independently documented by \citet{MailleTuffin2007}, who argue it ``can hardly be applied'' to inter-domain and ad hoc networks for two reasons: with capacity
constraints and single-path routing the allocation (and the $N{+}1$ pricing
optimizations) is NP-hard, and without them VCG violates budget balance,
requiring the operator to inject money. The divisible, multi-path
(network-flow) formulation presented in this paper dissolves this first objection -- the WDP is a polynomial linear program, so the VCG allocation and payments are all tractable.

In addition, in contrast to \citet{Belzarena2009}, \citet{MailleTuffin2007} and \citep{Csercsik2022}, we also consider the question of stability and propose a core-stable payment rule.
The impossibility framing of \citet{MoulinShenker2001} (efficiency, incentive compatibility,
individual rationality and budget balance cannot hold jointly)
is consistent with our design: the dual-price rule relaxes dominant-strategy
incentive compatibility, not budget balance or individual rationality.

Auction and combinatorial-market designs for network-path allocation also
appear in other regulated infrastructures, notably railway capacity
allocation \citep{Stojadinovic2019,Shahmohammadi2024}, where competing operators
bid for train paths through a shared, capacitated network.
In the setting of railway capacity allocation, a train path is a rigid sequence of track segments and time slots,
far less substitutable than fungible gas flow, so the route bundle is characterised by an intrinsic complementarity and indivisibility.
In contrast, our contribution is designed for fungible and divisible commodities like gas or bandwidth.
In the following we will primarily consider the context of natural gas networks, and define the assumptions of the proposed model accordingly.

\section{Network model}\label{sec:model}

We work with the network abstraction of the model used in \citet{Csercsik2022}. Let $G=(N,E)$ be a
directed graph: nodes $N$ represent market areas -- the entry--exit zones within
which capacity is booked and traded, whose efficiency is analysed
by \citet{keller2019mergers} -- and each edge $e\in E$
represents a directed capacity product with maximal transfer capacity
$\bar q_e\ge 0$ and per-unit transfer cost $c^t_e\ge 0$. A subset
$N_s\subseteq N$ of nodes are \emph{sources}; source $s\in N_s$ can inject gas
at per-unit cost $c^s_s\ge 0$. The good is \emph{homogeneous} -- gas of identical
quality (calorific value) regardless of source -- so a unit delivered to a
consumer is equally valuable whatever inlet it entered through; sources differ
only in price and location, not in the good itself. The remaining nodes host
\emph{consumers} (bidders) or are purely intermediate. Each bidder $i$ consumes
at some node $n_i$ (no consumption node is shared by two bidders) and, in
line with the piecewise-constant inverse-demand curves used
in \citet{Csercsik2022}, has a concave, piecewise-linear, nondecreasing
willingness-to-pay.

Let $\pipes$ denote the set of physical pipes. Each pipeline is modelled by a pair of oppositely directed edges with
independent nonnegative capacities; a physical flow uses at most one direction
of a pipe at a time. We distinguish two capacity conventions and treat both:

\begin{description}[leftmargin=1.4em]
\item[Directed-arc (no netting).] Each directed edge has its own capacity
constraint on its own nonnegative flow.
\item[Netting.] Counter-directed flows on the two arcs of a pipe cancel, so the
capacity constraint binds the \emph{net} flow, as in Eq.~(7)
of \citet{Csercsik2022}.
\end{description}
Under the directed-arc convention, opposite directional capacity bookings are treated as independent and may coexist; under netting, they are aggregated into physical net flow.
\subsection{Two source regimes}
\label{subsec_two_source_regimes}

The distinction that organises the paper is how sources may be accessed by
consumers.

\begin{definition}[General model with (potentially) dedicated sources]\label{def:general}
The accessibility and/or price of a source may depend on the individual
consumer: bidder $i$ accesses source $s$ at price $c^{s}_{s,i}\in\Rp\cup\{+\infty\}$
($+\infty$ encoding inaccessibility). The restricted model is the special case
$c^s_{s,i}=c^s_s$ for all $i$ with, in addition, source indifference.
\end{definition}

Consumer-dependent access is not a purely artificial idea; it is motivated by
practical relevance. In natural gas trading, long-term supply relationships,
take-or-pay contracts, and the embedding of gas deals in bilateral political
bargaining all affect the effective price at which a given importer can draw from a
given source specific to that importer in the period for which capacity is
allocated; the geopolitical discussion of \citet{Bilgin2009} is precisely about
such consumer-specific access. Our results identify source indifference,
rather than divisibility or network size, as the boundary of the positive
guarantee: consumer-specific access and inlet-specific caps are distinct ways
to cross that boundary.

\begin{definition}[Restricted model $\Rr$]\label{def:restricted}
Every node is source-only, consumer-only, or intermediate; every source is
openly accessible to every bidder at a \emph{consumer-independent} posted price
$c^s_s$; and, the good being homogeneous, each bidder is \emph{indifferent to the
source} of its gas -- it values only the quantity delivered to its own sink, not
the inlet through which it entered. This is the regime of \citet{Csercsik2022}.
\end{definition}

\section{Preliminaries: core and VCG}\label{sec:prelim}

We recall the two solution concepts of auction theory used in the paper. Let the set of players be
$I\cup\{0\}$, where $I$ is the set of bidders and $0$ the auctioneer (holder of
all capacity). An \emph{outcome} is an ordered pair of an allocation ($\mathcal{A}$) and a payment vector $P$. The payoff or utility of player
$i$ is $u_i=V_i(\mathcal{A})-p_i$, where $V_i(\mathcal{A})$ is the valuation of items received by player $i$ in allocation $\mathcal{A}$.

In our case, a bidder's total payment $p_i$ may be decomposed into three streams with distinct
recipients, and only one of them contributes to the income of the auctioneer. The
\emph{source payments} represent income to the gas suppliers (or, for positioned
gas, are the shipper's own sunk acquisition cost); the \emph{transfer payments}
are paid to the local transmission system operators (TSOs), who under the EU
third-party-access principle set and collect these regulated tariffs on their own
infrastructure; and the \emph{capacity payments} $CC_i$ -- the charge for the
scarce capacity licences -- accrue to the auctioneer. Source and transfer
payments are therefore pass-throughs, not auctioneer income; their explicit form
in terms of the source costs $c^s$ and transfer costs $c^t$ is given with the
winner-determination problem in Section~\ref{subsec:language} (eq.~\eqref{eq:wdp}).
Accordingly the auctioneer's payoff is
\[
   u_0=\sum_{i}CC_i,
\]
the sum of capacity payments only, in line with the income of the auctioneer in
the model of \citet{Csercsik2022}. As shown in Section~\ref{subsec:dual}, under the
dual price this equals the congestion rent
$\sum_{a\in\pipes}(\lambda^{+\star}_a\bar q^+_a+\lambda^{-\star}_a\bar q^-_a)\ge 0$,
so the auctioneer earns exactly the scarcity value of the binding capacity
constraints and nothing in the absence of congestion. Suppliers and TSOs are
paid precisely their costs and are treated as cost sinks rather than players in the proposed model.
Consequently, the coalition value $\wval(S)$ below is equal to the \emph{net welfare} (valuations minus source and transfer costs), which is exactly the objective of the WDP.

\begin{definition}[Coalition value]
For $S\subseteq I$, the coalition value $\wval(S)$ is the optimum of the
winner-determination program in force -- see eq. \eqref{eq:wdp-r} in the restricted model
(Section~\ref{sec:restricted}), and eq. \eqref{eq:wdp} in the general model
(Section~\ref{sec:general}) -- the maximal net welfare attainable when the
bidders in $S$ are served over the full network, with $\wval(S)=0$ if the
auctioneer is absent. We set $\wval(\emptyset)=0$.
\end{definition}

\begin{definition}[Core]\label{def:core}
An outcome with payoffs $(u_i)_{i\in I},u_0$ is in the \emph{core} if it is
feasible ($\sum_{i\in I}u_i+u_0=\wval(I)$), individually rational ($u_i\ge 0$),
and \emph{unblocked}: for every coalition $S\subseteq I$,
\[
   u_0+\sum_{i\in S}u_i \;\ge\; \wval(S).
\]
Equivalently, no coalition of the auctioneer with a set of bidders could, on its
own, guarantee its members more than they receive.
\end{definition}

\begin{definition}[VCG payment]\label{def:vcg}
Under the VCG (Clarke pivot) rule \citep{Vickrey1961,Clarke1971,Groves1973}, the
allocation maximises total reported value and bidder $i$ pays the externality it
imposes,
\[
   p_i^{\mathrm{VCG}} \;=\; \wval(I\setminus\{i\}) \;-\;
      \bigl[\wval(I)-V_i(\mathcal{A}_i)\bigr],
\]
the difference between the others' best value without $i$ and their value with
$i$ present. VCG is efficient and dominant-strategy incentive compatible (DSIC).
\end{definition}

A classical sufficient condition links the two. If $\wval$ is
\emph{submodular} over bidders -- equivalently, bidders are substitutes in the
sense of \citet{AusubelMilgrom2002,AusubelMilgrom2006} -- then the VCG payoff
vector lies in the core and is its bidder-optimal point; see also
\citet{bikhchandani2002package} for the underlying linear-programming
characterisation. Conversely, when $\wval$ is not submodular,
VCG revenue can be so low that a coalition blocks, and VCG leaves the core --
the phenomenon that motivates the concept of core-selecting auctions \citep{DayMilgrom2008}. 

\section{The general model: OD bidding and winner determination}
\label{sec:general}

We first formulate the general allocation problem, allowing source access
and source prices to depend on the bidder.  The restricted model will then be
obtained by imposing source indifference.

\subsection{The OD bidding language}
\label{subsec:language}

Instead of enumerating routes between potentially multiple sources and a
consumer's destination, a bidder describes only where gas may enter and where
it is to be delivered; the mechanism chooses the route.

\begin{definition}[OD licence]
An \emph{atomic licence} of bidder $i$ is a tuple
$b=(o_b,q_b,\unitval_b)$: an inlet node $o_b$, the bidder's outlet
$t_b=n_i$, a maximum quantity $q_b\ge0$, and a gross per-unit value
$\unitval_b\ge0$. Accepting $b$ at level $d_b\in[0,q_b]$ injects $d_b$ units at
$o_b$, delivers them at $n_i$, and contributes $\unitval_bd_b$ to gross value.
\end{definition}

\begin{definition}[Bid region]\label{def:bidregion}
Bidder $i$ submits a finite set $\mathcal B_i$ of atomic licences and an
overall cap $Q_i^T$. Its accepted vector ranges over the bounded polytope

\[
 P_i=\Bigl\{d_i\in\Rp^{\mathcal B_i}:d_b\le q_b\ \forall b,\quad
       \sum_{b\in\mathcal B_i}d_b\le Q_i^T\Bigr\}.
\]
The reported gross valuation is
$V_i(d_i)=\sum_{b\in\mathcal B_i}\unitval_bd_b$.
\end{definition}

\begin{example}\label{ex:license}
For a bidder consuming at node $C$, the statement ``I want up to $2$ units
from $A$ to my node $C$, worth $\unitval_1$ each; or up to $3$ units from $B$
to $C$, worth $\unitval_2$ each; but at most $4$ units in total'' corresponds to
the licences $(A,2,\unitval_1)$ and $(B,3,\unitval_2)$ with $Q^T=4$. The bidder names no path.
\end{example}

Source access is encoded in $\mathcal B_i$: a bidder that cannot use
source $s$ names no license originating there.

\subsection{General winner determination}

Recall that $\pipes$ is the set of physical pipes. Give each pipe
$a\in\pipes$ an arbitrary reference orientation
$a^+=(u_a,v_a)$ and reverse orientation $a^-=(v_a,u_a)$, with directional
capacities $\bar q_a^+,\bar q_a^-$ and nonnegative costs
$c_a^{t,+},c_a^{t,-}$. Write $f_{i,a}^+,f_{i,a}^-\ge0$ for bidder $i$'s
directional flows and $A$ for the incidence matrix of the resulting directed
arc set $\vec E=\{a^+,a^-:a\in\pipes\}$, which is the directed edge set $E$ of Section~\ref{sec:model} with its orientation made explicit. In the netting convention,
$g_a^+,g_a^-$ form a nonnegative split of the aggregate net flow. For a
coalition $S\subseteq I$, the general WDP is

\begin{equation}\label{eq:wdp}
\begin{aligned}
\wval(S)=\max_{d,f,g}\quad
 &\sum_{i\in S}V_i(d_i)
   -\sum_{a\in\pipes}\bigl(c_a^{t,+}g_a^++c_a^{t,-}g_a^-\bigr)
   -\sum_{i\in S}\sum_{b\in\mathcal B_i}c^s_{o_b,i}d_b\\
\text{s.t.}\quad
 &\sum_{a\in\vec E}A_{n,a}f_{i,a}
    =\mathbf 1[n=n_i]\sum_{b\in\mathcal B_i}d_b
      -\!\!\sum_{b\in\mathcal B_i:o_b=n}\!d_b
       &&\forall i\in S,\ \forall n \quad(\pi_{i,n}),\\
 &\sum_{i\in S}\bigl(f_{i,a}^+-f_{i,a}^-\bigr)=g_a^+-g_a^-
       &&\forall a\in\pipes \quad (\tau_a),\\
 &g_a^+\le\bar q_a^+ &&\forall a\in\pipes\quad(\lambda_a^+),\\
 &g_a^-\le\bar q_a^- &&\forall a\in\pipes\quad(\lambda_a^-),\\
 & d_b\le q_b
&& \forall i\in S,\ \forall b\in\mathcal B_i
\quad(\alpha_b),\notag\\
& \sum_{b\in\mathcal B_i}d_b\le Q_i^T
&& \forall i\in S\quad(\beta_i),\notag\\
 &d,f,g^+,g^-\ge0.
\end{aligned}
\end{equation}

Thus netting is imposed on the two directions of the \emph{same physical
pipe}, not on a single directed arc. The directed-arc convention is obtained
from \eqref{eq:wdp} by deleting $g^+,g^-$ and the net-flow identity, imposing

\[
 \sum_{i\in S}f_{i,a}^+\le\bar q_a^+,
 \qquad
 \sum_{i\in S}f_{i,a}^-\le\bar q_a^-
 \qquad(a\in\pipes),
\]
and replacing the transfer-cost term by
$\sum_{a,\sigma\in\{+,-\}}c_a^{t,\sigma}\sum_i f_{i,a}^{\sigma}$.
The conservation equation says exactly that bidder $i$ injects $d_b$ at each
licence inlet and withdraws $D_i=\sum_{b\in\mathcal B_i}d_b$ at its sink $n_i$. Both capacity conventions give
linear programs. LP~\eqref{eq:wdp} defines $\wval(S)$ in the general
model; a coalition without the auctioneer controls no capacity and has value
zero.

\begin{remark}
Consider a route-CCA of \citet{Csercsik2022} defined on the same network, with the same source
access, costs, capacity convention, and bidder quantity restrictions as the OD
WDP. If every enumerated route is a valid path from its stated inlet to the
bidder's sink, then every allocation feasible for that route-CCA can be embedded
in the OD WDP by routing each accepted quantity along its enumerated path.
Consequently, the route-CCA feasible set is contained in the OD feasible set
and
\[
    \wval(I)\ge \wval^{\mathrm{CCA}}(I).
\]
Conversely, if the route-CCA enumerates all feasible inlet--sink paths, then an
optimal OD flow can be decomposed into path flows for the individual bidders.
Since transfer costs are nonnegative, flow circulations that do not contribute
to any delivery can be omitted without decreasing welfare. Each remaining path
component corresponds to an enumerated route of the route-CCA, and assigning its
flow value to that route reproduces the same accepted quantities, network loads,
and welfare. Hence every optimal OD allocation has an equivalent allocation in
the completely enumerated route-CCA, so
\[
    \wval(I)\le \wval^{\mathrm{CCA}}(I).
\]
Therefore, under complete route enumeration,
\[
    \wval(I)=\wval^{\mathrm{CCA}}(I).
\]
Thus the advantage of the OD formulation is not a relaxation of the allocation
problem: it represents the complete-route optimum without requiring the bidders
to enumerate an exponentially large family of paths.
\end{remark}

\section{The restricted model: quantity--price licences and core-stable VCG}
\label{sec:restricted}

Throughout this section we are in the restricted model $\Rr$
(Definition~\ref{def:restricted}): sources are open and uniformly priced, and
each bidder is indifferent to which source supplies its (homogeneous) gas.

\subsection{Quantity--price licences}\label{subsec:qp}

\begin{definition}[Atomic licences and inverse demand]\label{def:qp}
In $\Rr$ a bidder $i$ (consuming at $n_i$) submits a finite set $\mathcal B_i$ of
atomic \emph{quantity--price} licences, each a pair $b=(q_b,\unitval_b)$: up to $q_b$
units delivered at $n_i$, worth $\unitval_b$ each. With acceptance levels
$d_b\in[0,q_b]$ and total delivery $D_i=\sum_{b\in\mathcal B_i}d_b$, the reported
value is $V_i(d_i)=\sum_{b\in\mathcal B_i}\unitval_b d_b$. Ordering the licences by
decreasing $\unitval_b$ makes $V_i$ a concave, piecewise-linear, nondecreasing function
of $D_i$ -- the bidder's inverse demand curve. We write
$P_i=\{d_i\ge0:d_b\le q_b\ \forall b,\
\sum_{b\in\mathcal B_i}d_b\le Q_i^T\}$. If no additional total cap is
desired, set $Q_i^T=\sum_{b\in\mathcal B_i}q_b$, so the last inequality is
redundant.
\end{definition}


\subsection{Why source indifference matters}\label{subsec:necessity-indiff}

Open, uniformly priced sources are not by themselves enough for the results
below: if a bidder could attach \emph{inlet-specific} quantity caps to its
licences, both the no-opposing-flows lemma  (Section~\ref{subsec:noopp}) and submodularity would fail. Source
indifference (no per-inlet caps) is exactly what rules this out.

\begin{example}[Per-source caps may induce opposing flows]\label{ex:oppose}
Let a single bidirectional pipe join $A$ and $B$, with a zero-cost source at each
end. Bidder $1$ consumes at $A$ but may draw at most $1/4$ from the source at $A$
and up to $1$ from the source at $B$; bidder $2$ is the mirror image at $B$. Each
wants one unit. Serving both requires bidder $1$ to import at least $3/4$
of a unit $B\to A$ and bidder $2$ to import at least $3/4$ of a unit $A\to B$,
so the pipe carries flow in \emph{both}
directions at the optimum. The source-swap that would remove the opposition --
reroute bidder $1$ to draw from $A$ -- is infeasible, as it violates bidder
$1$'s cap of $1/4$ at $A$.
\end{example}

\begin{example}[Submodularity fails under per-source caps]\label{ex:submodfail}
Take sources $s_A,s_B$ and arcs $s_A\to A\xrightarrow{e_1}B\xrightarrow{e_2}T$
and $s_B\to B$, with $e_1,e_2$ of capacity $1$ (Figure~\ref{fig:ce}); all costs
zero. Bidder $Y$ consumes at $B$; bidders $X$ and $Z$ consume at distinct nodes $t_X$ and $t_Z$, reached from $T$ by uncapacitated zero-cost arcs, so their gas
must cross $e_2$. Each wants at most one unit at value $1$, with inlet-specific
caps $X$: at most $1$ from $s_A$ and $\tfrac14$ from $s_B$; $Y$: same as $X$;
$Z$: at most $\tfrac14$ from $s_A$ and $1$ from $s_B$. Solving the flow problem per
coalition gives $\wval(X)=1$, $\wval(X,Z)=1$, $\wval(X,Y)=\tfrac32$,
$\wval(X,Y,Z)=2$, so
\[
   \wval(X,Y)+\wval(X,Z)=\tfrac52 \;<\; 3=\wval(X,Y,Z)+\wval(X),
\]
violating submodularity: the caps make $Y$ and $Z$ complements.
\end{example}

\begin{figure}[h]
\centering
\begin{tikzpicture}[>=Stealth,node distance=15mm,
  every node/.style={circle,draw,minimum size=7mm,inner sep=0pt}]
  \node (sA) {$s_A$};
  \node (A) [right=of sA] {$A$};
  \node (B) [right=of A] {$B$};
  \node (T) [right=of B] {$T$};
  \node (sB) [above=8mm of B] {$s_B$};
  \draw[->] (sA) -- (A);
  \draw[->] (A) -- node[above,draw=none,fill=none,inner sep=1pt]{$e_1$} (B);
  \draw[->] (B) -- node[above,draw=none,fill=none,inner sep=1pt]{$e_2$} (T);
  \draw[->] (sB) -- (B);
  \node[draw=none] (Y) [below=6mm of B] {$Y$};
  \draw[->] (B) -- (Y);
  \node (tX) [above right=4mm and 14mm of T] {$t_X$};
  \node (tZ) [below right=4mm and 14mm of T] {$t_Z$};
  \draw[->] (T) -- (tX);
  \draw[->] (T) -- (tZ);
  \node[draw=none] [right=2mm of tX] {$X$};
  \node[draw=none] [right=2mm of tZ] {$Z$};
\end{tikzpicture}
\caption{Example~\ref{ex:submodfail}. With inlet-specific caps the coalition
value is not submodular; source indifference removes the effect.}
\label{fig:ce}
\end{figure}

Both failures come from the same feature -- an inlet-specific cap forces a bidder
onto a particular source, manufacturing a counter-directed import
(Example~\ref{ex:oppose}) or a complementarity between otherwise-independent
bidders (Example~\ref{ex:submodfail}). Neither is present under source
indifference, which is why Definition~\ref{def:qp} admits only sink-side caps.

\subsection{Existence of optimal flows that do not oppose}\label{subsec:noopp}

\begin{lemma}[No opposing flows]\label{lem:noopposing}
In $\Rr$, every netting-feasible allocation can be converted into a
directed-arc feasible allocation with the same deliveries and weakly greater
welfare, in which no pipe carries flow in both directions.
\end{lemma}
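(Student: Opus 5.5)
The plan is to aggregate the bidders first, so that only one commodity remains, and then cancel opposing flows pipe by pipe. In $\Rr$ every source is open to every bidder at the same price $c^s_s$, and bidders care only about $D_i$ delivered at $n_i$. So a feasible allocation is described by three objects: the deliveries $D_i$; the source injections $x_s=\sum_{i}\sum_{b:o_b=s}d_b$, where I read an inlet choice as part of the allocation; and a single aggregate flow $F_a^\pm=\sum_i f^\pm_{i,a}$. This flow satisfies conservation with supply $x$ at sources and demand $D$ at sinks. Welfare depends only on $D$ (gross value), on $x$ (source cost $\sum_s c^s_s x_s$), and on the arc flows (transfer cost). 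The first step is to check that any single-commodity flow with supplies $x$ and demands $D$ can be redistributed among the bidders. I will do this with a flow decomposition into source-to-sink paths plus cycles: each path ending at $n_i$ is assigned to bidder $i$, and its starting source becomes that bidder's inlet. Source indifference, meaning that no inlet-specific caps exist, makes any such assignment admissible. Only the caps $q_b$ and $Q_i^T$ on $D_i$ matter, and these are unchanged.

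Next I start from a netting-feasible allocation with net variables $g^\pm_a$ satisfying $g^+_a-g^-_a=\sum_i(f^+_{i,a}-f^-_{i,a})$, $g^\pm_a\le\bar q^\pm_a$. I define a new aggregate flow on each pipe $a$:
\[
 h^+_a=\max\bigl(0,\,F^+_a-F^-_a\bigr),\qquad h^-_a=\max\bigl(0,\,F^-_a-F^+_a\bigr).
\]
This flow has the same net divergence at every node, so it satisfies conservation with the same $x$ and $D$. It uses at most one direction per pipe. I then check capacity under the directed-arc convention. Since $g^+_a-g^-_a=F^+_a-F^-_a$ and $g^\pm_a\ge 0$, we get $h^+_a\le g^+_a\le\bar q^+_a$ and symmetrically $h^-_a\le g^-_a\le\bar q^-_a$. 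Hence $h$ is feasible even when arcs are capacitated separately.

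For welfare, the netting transfer cost $c^{t,+}_ag^+_a+c^{t,-}_ag^-_a$ is at least $c^{t,+}_ah^+_a+c^{t,-}_ah^-_a$, because $h^\pm_a\le g^\pm_a$ and costs are nonnegative. Deliveries and source costs are unchanged. Finally I decompose $h$ into paths and cycles, drop the cycles (they cost $\ge0$, so dropping them weakly increases welfare, and they remove no delivery), and assign the paths to bidders as in the first step. This gives per-bidder flows $f'_{i,a}$ whose sum is at most $h$, so no pipe is used in both directions, and each bidder has the same $D_i$.

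The main obstacle I expect is the bookkeeping in the reassignment step. The original inlet-level acceptances $d_b$ may not match the sources the new paths start from. I need to argue that in $\Rr$ a licence carries no inlet, only $(q_b,\unitval_b)$, so the inlet assignment is free and only the totals $x_s$ enter the objective. If one insists on keeping each bidder's original inlets, the argument breaks: this is exactly Example~\ref{ex:oppose}. So that example will be cited to explain why source indifference is used precisely at this point.
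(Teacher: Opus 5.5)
Your proposal is correct and follows essentially the same route as the paper's proof. On each pipe you cancel $\min\{F_a^+,F_a^-\}$ so that only the net flow $[F_a^+-F_a^-]_\pm$ remains, which leaves node balances and deliveries unchanged, lets the netting bounds imply the separate directed bounds (your explicit $h_a^\pm\le g_a^\pm$), and weakly lowers transfer costs; you then re-decompose the aggregate flow into source--sink paths (dropping cycles, a step the paper leaves implicit) and assign them to the bidders at their sinks, which source indifference makes admissible.
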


\begin{proof}
Fix a netting-feasible allocation and a physical pipe $a$. Write its
gross directional flows as $x_a^+,x_a^-\ge0$ and its net flow as
$z_a=x_a^+-x_a^-$. Cancel
$\delta_a=\min\{x_a^+,x_a^-\}$ and retain

\[
 \widetilde x_a^+=[z_a]_+,
 \qquad
 \widetilde x_a^-=[-z_a]_+.
\]
At the two endpoints, the contribution to net inflow is unchanged; hence
every node balance is unchanged. The netting bounds
$z_a\le\bar q_a^+$ and $-z_a\le\bar q_a^-$ imply the separate directed bounds
$\widetilde x_a^+\le\bar q_a^+$ and
$\widetilde x_a^-\le\bar q_a^-$. Nonnegative transfer costs weakly decrease
(and under the net-flow cost convention are unchanged). Most importantly,
the source injections, accepted quantities $d_i$, and deliveries $D_i$ are
untouched, so every
atomic cap $d_b\le q_b$ and total cap $\sum_b d_b\le Q_i^T$ remains feasible.
The resulting aggregate flow may be decomposed into source--sink paths and
the paths ending at $n_i$ assigned to bidder $i$; source indifference makes
this assignment valid because no bidder-specific origin constraint remains.
Applying this cancellation pipe by pipe proves the claim.
\end{proof}

\begin{corollary}[Netting is inert in $\Rr$]\label{cor:netting-inert}
In $\Rr$ the netting and directed-arc models attain the same optimum.
\end{corollary}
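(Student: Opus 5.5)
The corollary asks for equality of two optima, so I will prove the two inequalities separately, working in $\Rr$ and writing $\wval^{\mathrm{net}}(S)$ and $\wval^{\mathrm{arc}}(S)$ for the values of LP~\eqref{eq:wdp} under netting and under the directed-arc convention. For a coalition $S$, I will show $\wval^{\mathrm{net}}(S)\le\wval^{\mathrm{arc}}(S)$ and $\wval^{\mathrm{arc}}(S)\le\wval^{\mathrm{net}}(S)$; the statement is then the case $S=I$, and the argument also covers every coalition value used later.

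\emph{Netting $\le$ directed-arc.} Take an optimal netting allocation; it exists because both LPs are feasible (zero flow) and bounded (each $P_i$ is bounded). Lemma~\ref{lem:noopposing} converts it into a directed-arc feasible allocation with the same deliveries and weakly greater welfare. For the welfare comparison I need to line up the two cost terms. In the netting LP, transfer costs are charged on the split $g_a^+,g_a^-$. Since $c_a^{t,\pm}\ge0$, an optimal split can be taken with $g_a^+=[z_a]_+$ and $g_a^-=[-z_a]_+$. These are exactly the retained flows $\widetilde x_a^\pm$ in the proof of the lemma, so the directed-arc cost of the converted allocation is at most the netting cost. Source costs $c^s_{o_b}d_b$ and gross values $V_i$ depend only on injections and deliveries, which are unchanged. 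Hence $\wval^{\mathrm{arc}}(S)\ge\wval^{\mathrm{net}}(S)$.

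\emph{Directed-arc $\le$ netting.} Given a directed-arc feasible allocation with per-bidder flows $f^\pm_{i,a}$, set $x_a^\pm=\sum_i f^\pm_{i,a}$, choose $g_a^+=[x_a^+-x_a^-]_+$ and $g_a^-=[x_a^--x_a^+]_+$, and keep $f$ and $d$ as they are.
\begin{itemize}
\item The net-flow identity holds by construction.
\item Capacity holds because $g_a^+\le x_a^+\le\bar q_a^+$, and likewise for $g_a^-$.
\item Conservation constraints and caps are untouched.
\item The netting transfer cost $c_a^{t,+}g_a^++c_a^{t,-}g_a^-$ is at most the directed-arc cost $c_a^{t,+}x_a^++c_a^{t,-}x_a^-$.
\end{itemize}
So the netting LP attains at least the same welfare. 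This direction does not need source indifference; it is pure relaxation.

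\emph{Main obstacle.} The delicate step is the first direction. After cancelling opposing flows, the aggregate flow must still be split into per-bidder flows satisfying each bidder's conservation constraint. I will rely on the path-decomposition step in Lemma~\ref{lem:noopposing}: decompose the aggregate flow into source--sink paths and cycles, drop the cycles (costs are nonnegative), and assign the paths ending at $n_i$ to bidder $i$. The quantity arriving at $n_i$ equals $D_i$ because node balances are unchanged. In $\Rr$ a bidder's licences carry no inlet, so any split of $D_i$ across sources is compatible with $P_i$. This is the only point where source indifference is used, consistent with Example~\ref{ex:oppose}, where the equality fails once inlet-specific caps are allowed.
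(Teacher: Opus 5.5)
Your proof is correct and follows the paper's argument. A directed-arc solution is netting-feasible at weakly lower transport cost (your split $g_a^\pm=[\pm(x_a^+-x_a^-)]_+$ matches the paper's capacity check), and Lemma~\ref{lem:noopposing} gives the converse; your minimal-split and path-decomposition remarks just spell out what that lemma's proof already does.

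One minor quibble concerns only your closing aside, not the proof: Example~\ref{ex:oppose} leaves the pipe capacities unspecified and exhibits opposing flows, not a welfare gap. A gap between the two conventions arises there only if some directional capacity is below $3/4$.
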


\begin{proof}
Every directed-arc feasible solution is netting-feasible: if its flows on
$a$ are $x_a^+,x_a^-$, then
$z_a=x_a^+ -x_a^-\le x_a^+\le\bar q_a^+$ and
$-z_a\le x_a^-\le\bar q_a^-$. Netting can only weakly reduce its transport
cost, so $\wval_{\rm net}\ge\wval_{\rm dir}$. Conversely, Lemma~\ref{lem:noopposing}
converts \emph{any} netting-feasible solution, including a netting optimum,
into a directed-arc feasible solution of weakly greater welfare. Thus
$\wval_{\rm dir}\ge\wval_{\rm net}$, proving equality.
\end{proof}

\subsection{Restricted winner determination}\label{subsec:wdp-restricted}

We may therefore formulate the restricted WDP using directed arcs. Add a
super-source $Z^\ast$ and an uncapacitated source arc $(Z^\ast,s)$ for every
$s\in N_s$. Let
\[
E^\ast=\vec E\cup\{(Z^\ast,s):s\in N_s\},
\]
and define the per-unit arc cost by
\[
c_e=
\begin{cases}
c_e^t, & e\in\vec E,\\
c^s_s, & e=(Z^\ast,s),\ s\in N_s.
\end{cases}
\]
Write $f_e\ge0$ for the single-commodity flow on every arc $e\in E^\ast$,
and let $\delta^-(n)$ and $\delta^+(n)$ denote the incoming and outgoing arcs
of $n$ in the extended graph. The source arcs are uncapacitated, while the
transport arcs retain their capacities $\bar q_e$. For each coalition
$S\subseteq I$, the restricted WDP is the linear program

\begin{align}
\wval(S)=\max_{d,f}\quad
& \sum_{i\in S}\sum_{b\in\mathcal B_i}\unitval_b d_b
  -\sum_{e\in E^\ast} c_e f_e
\label{eq:wdp-r}\\
\text{s.t.}\quad
& \sum_{e\in\delta^+(n)}f_e-\sum_{e\in\delta^-(n)}f_e
  +\sum_{\substack{i\in S\\ n_i=n}}D_i
  -\mathbf 1[n=Z^\ast]\sum_{i\in S}D_i
  =0
&& \forall n\in N\cup\{Z^\ast\}
\quad(\pi_n),\notag\\
& f_e\le\bar q_e
&& \forall e\in\vec E
\quad(\lambda_e),\notag\\
& d_b\le q_b
&& \forall i\in S,\ \forall b\in\mathcal B_i
\quad(\alpha_b),\notag\\
& \sum_{b\in\mathcal B_i}d_b\le Q_i^T
&& \forall i\in S
\quad(\beta_i),\notag\\
& d,f\ge0. \notag
\end{align}

{Parentheses denote the corresponding dual variables. Thus there is a free conservation dual $\pi_n$ at \emph{every} node,
including $Z^\ast$, sources, and intermediate nodes. A licence names no inlet:
the single-commodity flow chooses the source. LP~\eqref{eq:wdp-r} defines
the restricted coalition value and, by Corollary~\ref{cor:netting-inert}, also
the value under netting.}

\subsection{Submodularity and core-stability}\label{subsec:submod-restricted}

\begin{lemma}[Submodularity of the coalition value]\label{lem:submod}
In $\Rr$, the coalition value $\wval$ of \eqref{eq:wdp-r} is monotone and
submodular over $I$.
\end{lemma}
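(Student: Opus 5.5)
We plan to prove monotonicity and submodularity with a flow-decomposition exchange argument on a circulation reformulation of \eqref{eq:wdp-r}, rather than by manipulating the dual directly. \emph{Monotonicity} is immediate. If $S\subseteq S'$, any feasible solution for $S$ extends to $S'$ by setting $d_b=0$ for every licence of a bidder in $S'\setminus S$. Conservation, capacities and caps are unaffected and the objective is unchanged, so $\wval(S')\ge\wval(S)$.

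For \emph{submodularity} it suffices to show $\wval(S\cup\{i\})+\wval(S\cup\{j\})\ge \wval(S\cup\{i,j\})+\wval(S)$ for $i\ne j\notin S$. We first recast \eqref{eq:wdp-r} as a maximum-weight circulation on an extended graph. Add a super-sink $T$ and an arc $(T,Z^\ast)$ of infinite capacity and zero cost. For each bidder $k$, add an auxiliary node $u_k$, an arc $(u_k,T)$ of capacity $Q_k^T$, and one parallel arc $(n_k,u_k)$ per licence $b\in\mathcal B_k$, of capacity $q_b$ and weight $\unitval_b$. Transport and source arcs keep weight $-c_e$. The \emph{gadget} of $k$ is $u_k$ together with its incident arcs; it touches the rest of the graph only at $n_k$ and $T$. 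Then $\wval(S)$ is the maximum weight of a feasible circulation in which the gadgets of bidders outside $S$ carry zero flow. This is a direct translation of the conservation rows $(\pi_n)$ and of the caps $(\alpha_b),(\beta_k)$. Because the consumption nodes $n_k$ are distinct and every gadget feeds the same $T$, all bidders draw from one common commodity. Source indifference is used exactly here: no bidder-specific inlet constraint survives.

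Now let $f^{ij}$ be optimal for $S\cup\{i,j\}$ and $f^S$ optimal for $S$. The difference $h=f^{ij}-f^S$ is a circulation, so it admits a \emph{conformal} decomposition $h=\sum_k \gamma_k$ into simple cycles. Here $\gamma_k$ agrees in sign with $h$ on every arc, and some arc directions of a cycle may run against the arc orientation. Conformality yields the key feasibility property: for any subfamily $K$ of the cycles, $f^S+\sum_{k\in K}\gamma_k$ lies arcwise between $f^S$ and $f^{ij}$. It therefore satisfies all capacity bounds and nonnegativity, and it has zero flow on every gadget on which both endpoints are zero.

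The step we expect to be the main obstacle is showing that \emph{no cycle uses both the gadget of $i$ and that of $j$}. We plan to argue as follows. Since $f^S$ vanishes on both gadgets, $h\ge0$ there, so every cycle traverses gadget arcs of $i$ or $j$ only forward. Consequently a cycle entering gadget $i$ at $n_i$ must leave it through $(u_i,T)$. It cannot return to $n_i$ via a second licence arc, because that would require traversing a licence arc backward, which $h\ge0$ rules out. A simple cycle visits $T$ only once, and entering $T$ forward through gadget $j$ as well would require a second visit. Hence each cycle meets at most one of the two gadgets. Let $K_i$ collect the cycles meeting gadget $i$ together with those meeting neither, and let $K_j$ collect the cycles meeting gadget $j$. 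Then $f^S+\sum_{K_i}\gamma_k$ is feasible for $S\cup\{i\}$ and $f^S+\sum_{K_j}\gamma_k$ is feasible for $S\cup\{j\}$. The objective is linear, so the weights of these two circulations sum to $2\,\mathrm{wt}(f^S)+\mathrm{wt}(h)=\wval(S)+\wval(S\cup\{i,j\})$. This gives the submodular inequality. Finally, we will check that the argument covers netting as well; this follows from Corollary~\ref{cor:netting-inert}, since \eqref{eq:wdp-r} already gives the coalition value under both conventions.
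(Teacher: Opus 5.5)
Your proof is correct, and it takes a genuinely different route from the paper's. The paper works entirely on the dual side. After normalising $\pi_{Z^\ast}=0$ and minimising out $\lambda,\alpha,\beta$, it writes
\[
\wval(S)=\min_{\pi\in\Pi}\Bigl\{H(\pi)+\sum_{i\in S}\phi_i(\pi_{n_i})\Bigr\}.
\]
It then checks arc by arc, via convexity of $[\cdot]_+$, that $H$ is submodular on the lattice of potentials. Combined with the fact that each $\phi_i$ is nonincreasing, this lets it evaluate the dual for $A\cap B$ at $\pi^A\wedge\pi^B$ and for $A\cup B$ at $\pi^A\vee\pi^B$.

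You stay in the primal instead. You encode each bidder as a gadget feeding a common super-sink $T$ and decompose $f^{ij}-f^S$ conformally into cycles. The key fact is that a simple cycle passes through $T$ once, while gadget arcs of $i$ and $j$ can only be traversed forward; hence no cycle reaches $T$ through both new gadgets, and the cycles can be split between $S\cup\{i\}$ and $S\cup\{j\}$. Your use of conformality for feasibility and of linearity for the weight count is sound. Existence of the two optima is implicit in your argument; it holds because the LP is feasible and bounded.

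\emph{What each approach buys.} The paper's route proves $\wval(A)+\wval(B)\ge\wval(A\cap B)+\wval(A\cup B)$ for arbitrary $A,B$ in one step. It also ties the lemma to the node potentials that reappear as nodal prices in Section~\ref{subsec:dual}. Your route avoids LP duality altogether and is constructive: it builds explicit feasible allocations for $S\cup\{i\}$ and $S\cup\{j\}$ from the two optima. It also isolates very sharply where source indifference enters. The difference of two allocations is a single-commodity circulation whose bidder-specific parts all funnel into one node. Under inlet-specific caps or dedicated access the allocation is inherently multicommodity, and this cycle splitting is no longer available, consistent with Examples~\ref{ex:submodfail} and~\ref{ex:series}.
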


\begin{proof}
By Corollary~\ref{cor:netting-inert} we use the directed-arc formulation; all
bidders share the single-commodity flow. For $r\in\R$ set
\[
   \phi_i(r)=\max_{d_i\in P_i}\{V_i(d_i)-rD_i\},\qquad D_i=\sum_{b\in\mathcal B_i}d_b,
\]
which is nonincreasing in $r$. The LP dual of \eqref{eq:wdp-r} is
\begin{equation}\label{eq:DS}
\begin{aligned}
\min_{\pi,\lambda,\alpha,\beta}\quad
 & \sum_{e\in\vec E}\bar q_e\lambda_e
   +\sum_{i\in S}\Bigl(\sum_{b\in\mathcal B_i}q_b\alpha_b+Q^T_i\beta_i\Bigr)\\
\text{s.t.}\quad
 & \lambda_e+\pi_u-\pi_v\ge -c^t_e, && e=(u,v)\in\vec E,\\
 & \pi_{Z^\ast}-\pi_s\ge -c^s_s, && s\in N_s,\\
 & \alpha_b+\beta_i+\pi_{n_i}-\pi_{Z^\ast}\ge \unitval_b, && i\in S,\ b\in\mathcal B_i,\\
 & \lambda_e,\alpha_b,\beta_i\ge 0.
\end{aligned}
\end{equation}
Normalise $\pi_{Z^\ast}=0$. Minimising over $\lambda_e\ge0$ contributes
$\bar q_e[\pi_v-\pi_u-c_e^t]_+$ to the reduced objective, and we may choose
$\lambda_e=[\pi_v-\pi_u-c_e^t]_+$ (also when $\bar q_e=0$). The source arcs give $\pi_s\le c^s_s$.

For fixed node potentials $\pi$, the dual variables $\alpha_b$ and $\beta_i$
associated with bidder $i$ solve the separate optimisation problem
\[
\begin{aligned}
\min_{\alpha,\beta}\quad
& \sum_{b\in \mathcal B_i} q_b\alpha_b + Q_i^T\beta_i \\
\text{s.t.}\quad
& \alpha_b+\beta_i \ge \unitval_b-\pi_{n_i},
\qquad b\in \mathcal B_i,\\
& \alpha_b\ge 0,\qquad \beta_i\ge 0.
\end{aligned}
\tag{$D_i(\pi_{n_i})$}
\]
But this is exactly the LP dual of
$\max
\left\{
V_i(d_i)-\pi_{n_i}D_i
\mid d_i\in P_i \right\}$.
Therefore, by strong duality,
\[
\min_{\alpha_b,\beta_i}
\left\{
\sum_{b\in \mathcal B_i} q_b\alpha_b + Q_i^T\beta_i\mid (\alpha,\beta) \in D_i(\pi_{n_i})
\right\}
=
\max_{d_i\in P_i}
\left\{
V_i(d_i)-\pi_{n_i}D_i
\right\}= \phi_i(\pi_{n_i}).
\]
Thus, after minimizing over the bidder-specific dual variables
$\alpha_b$ and $\beta_i$, bidder $i$ contributes simply
$\phi_i(\pi_{n_i})$ to the reduced dual objective. Hence

\begin{equation}
\label{eq:w-min}
w(S)
=
\min_{\pi\in\Pi}
\left\{
H(\pi)+\sum_{i\in S}\phi_i(\pi_{n_i})
\right\},
\end{equation}

where

$$
H(\pi)
=
\sum_{e=(u,v)}
\bar q_e[\pi_v-\pi_u-c_e^t]_+,
\qquad
\Pi=
\{\pi:\pi_{Z^\ast}=0,\ \pi_s\le c^s_s\ \forall s\in N_s\}.
$$

For two potential vectors $\pi,\rho$, let $\pi\wedge\rho$ and $\pi\vee\rho$ denote their componentwise minimum and maximum. We first note that
\begin{equation}\label{eq:H-submod}
   H(\pi)+H(\rho)\ge H(\pi\wedge\rho)+H(\pi\vee\rho).
\end{equation}
It suffices to check one arc $e=(u,v)$. If $\pi,\rho$ are ordered the same way at
$u$ and $v$ the two sides coincide; otherwise, say $\pi_u\le\rho_u$ and
$\pi_v\ge\rho_v$, the left arguments of $[\cdot]_+$ are
$\pi_v-\pi_u-c^t_e,\ \rho_v-\rho_u-c^t_e$ and the right ones
$\rho_v-\pi_u-c^t_e,\ \pi_v-\rho_u-c^t_e$; the two pairs have equal sum and the
left pair majorises the right, so convexity of $[\cdot]_+$ gives the arc
inequality, and summing yields \eqref{eq:H-submod}.

Now let $D_S(\pi)=H(\pi)+\sum_{i\in S}\phi_i(\pi_{n_i})$. Since each $\phi_i$ is
nonincreasing, for all $A,B\subseteq I$ and $\pi,\rho\in\Pi$,
\begin{equation}\label{eq:D-submod}
   D_A(\pi)+D_B(\rho)\ge D_{A\cap B}(\pi\wedge\rho)+D_{A\cup B}(\pi\vee\rho):
\end{equation}
bidders in $A\cap B$ contribute the same two terms on each side; for
$i\in A\setminus B$, $\phi_i(\pi_{n_i})\ge\phi_i(\max\{\pi_{n_i},\rho_{n_i}\})$,
symmetrically for $B\setminus A$; the $H$ terms obey \eqref{eq:H-submod}. Note
$\Pi$ is closed under $\wedge,\vee$.

The primal \eqref{eq:wdp-r} is feasible ($d=0$) and bounded ($P_i$ bounded,
capacities finite), so strong duality holds and the outer minimum in
\eqref{eq:w-min} is attained; let $\pi^A,\pi^B$ attain it for $A,B$. Then
\[
   \wval(A)+\wval(B)=D_A(\pi^A)+D_B(\pi^B)
   \ge D_{A\cap B}(\pi^A\wedge\pi^B)+D_{A\cup B}(\pi^A\vee\pi^B)
   \ge \wval(A\cap B)+\wval(A\cup B),
\]
using \eqref{eq:D-submod} and $\wval(S)\le D_S(\pi)$ for every $\pi\in\Pi$. Thus
$\wval$ is submodular; monotonicity holds since a bidder may deliver zero.
\end{proof}

\begin{proposition}[VCG is core-stable in $\Rr$]\label{prop:vcg-core}
In the restricted model the coalition value $\wval$ of \eqref{eq:wdp-r} is
submodular over bidders. Consequently the VCG payment vector lies in the core and
is its bidder-optimal point \citep{AusubelMilgrom2002,AusubelMilgrom2006}.
\end{proposition}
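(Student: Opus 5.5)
The proof combines Lemma~\ref{lem:submod} with the classical characterisation of the core by VCG under submodularity. Submodularity is exactly Lemma~\ref{lem:submod}, so what remains is to verify directly that the VCG payoff vector satisfies the core conditions of Definition~\ref{def:core}. Doing this by hand keeps the argument self-contained, rather than relying on the bidders-are-substitutes machinery of \citet{AusubelMilgrom2002}.

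\textbf{Step 1: write down the VCG payoffs.} By Definition~\ref{def:vcg}, the utility of bidder $i$ is $u_i = V_i(\mathcal{A}_i)-p_i^{\mathrm{VCG}} = \wval(I)-\wval(I\setminus\{i\})$, its marginal contribution. The auctioneer receives the residual $u_0=\wval(I)-\sum_{i\in I}u_i$, so feasibility holds by construction.

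There is one point to check here. The paper's $u_0$ counts only capacity payments, while source and transfer payments are pass-throughs. Because $\wval$ is already net of source and transfer costs, the residual identity is consistent with that accounting. Individual rationality $u_i\ge0$ is immediate from the monotonicity part of Lemma~\ref{lem:submod}.

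\textbf{Step 2: the blocking constraints.} Fix a coalition $S\subseteq I$. We need $u_0+\sum_{i\in S}u_i\ge\wval(S)$, which is equivalent to
\[
\sum_{i\in I\setminus S}\bigl(\wval(I)-\wval(I\setminus\{i\})\bigr)\;\le\;\wval(I)-\wval(S).
\]
Enumerate $I\setminus S=\{j_1,\dots,j_k\}$ and set $S_m=S\cup\{j_1,\dots,j_m\}$, so that $S_0=S$ and $S_k=I$. The right-hand side telescopes as $\sum_{m=1}^k\bigl(\wval(S_m)-\wval(S_{m-1})\bigr)$.

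Submodularity gives decreasing marginal returns, which we apply to $A=S_m$ and $B=I\setminus\{j_m\}$. Here $A\cup B=I$ and $A\cap B=S_{m-1}$, so
\[
\wval(S_m)-\wval(S_{m-1})\;\ge\;\wval(I)-\wval(I\setminus\{j_m\}).
\]
Summing over $m$ yields the required inequality. The case $S=\emptyset$ reduces to $u_0\ge 0$, which is included since $\wval(\emptyset)=0$.

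\textbf{Step 3: bidder-optimality.} Take any core payoff vector $u'$. Applying the blocking constraint to $S=I\setminus\{i\}$ together with feasibility gives $u'_i\le\wval(I)-\wval(I\setminus\{i\})=u_i$. Hence VCG gives every bidder simultaneously its maximal core payoff.

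The only delicate point is Step 2, and specifically checking that the diminishing-returns form follows from the lattice form of submodularity proved in Lemma~\ref{lem:submod}. The choice $A=S_m$, $B=I\setminus\{j_m\}$ above handles this. Everything else is bookkeeping.
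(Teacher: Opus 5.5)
Your proof is correct and follows the paper's route: you take submodularity from Lemma~\ref{lem:submod} and then derive core membership and bidder-optimality of the VCG payoffs from it. The only difference is that the paper just cites the classical Ausubel--Milgrom result for that second step, whereas you prove it directly, using the telescoping bound with $A=S_m$, $B=I\setminus\{j_m\}$ and the bound $u'_i\le\wval(I)-\wval(I\setminus\{i\})$ from the coalition $I\setminus\{i\}$; both arguments are valid.
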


\section{Consequences beyond source indifference}
\label{sec:general-results}

We now use the general OD formulation of Section~\ref{sec:general} to
compare it with route enumeration, exhibit failures of VCG core stability, and
derive a dual pricing rule.

\subsection{Route enumeration is a source of inefficiency}\label{subsec:inefficiency}

\begin{example}[Diamond with an un-enumerated alternate route]\label{ex:diamond}
Consider Figure~\ref{fig:diamond}. Node $1$ hosts a source $s$ (price $0$).
Capacities are $1$ on every edge. Transfer costs: $c^t=0$ on
$a{:}1{\to}2$, $c{:}2{\to}4$, $e{:}2{\to}5$, and $c^t=1$ on
$b{:}1{\to}3$, $d{:}3{\to}4$. Consumer $C_1$ at node $4$ and consumer $C_2$ at
node $5$ each demand $1$ unit, with values $v_1=10$, $v_2=8$. Both cheapest
routes ($C_1$ via $\{a,c\}$, $C_2$ via $\{a,e\}$) traverse the shared edge $a$.

\begin{figure}[h]
\centering
\begin{tikzpicture}[>=Stealth,node distance=16mm,
   every node/.style={circle,draw,minimum size=7mm,inner sep=0pt}]
  \node (n1) {$1$};
  \node (n2) [right=of n1] {$2$};
  \node (n3) [below=of n1] {$3$};
  \node (n4) [below=of n2] {$4$};
  \node (n5) [right=of n2] {$5$};
  \draw[->] (n1) -- node[above,draw=none,fill=none,inner sep=1pt]{$a\,(0)$} (n2);
  \draw[->] (n1) -- node[left, draw=none,fill=none,inner sep=1pt]{$b\,(1)$} (n3);
  \draw[->] (n2) -- node[right,draw=none,fill=none,inner sep=1pt]{$c\,(0)$} (n4);
  \draw[->] (n3) -- node[below,draw=none,fill=none,inner sep=1pt]{$d\,(1)$} (n4);
  \draw[->] (n2) -- node[above,draw=none,fill=none,inner sep=1pt]{$e\,(0)$} (n5);
  \node[draw=none] [left=5mm of n1] {$s$};
  \node[draw=none] [below=3mm of n4] {$C_1$};
  \node[draw=none] [right=4mm of n5] {$C_2$};
\end{tikzpicture}
\caption{Example~\ref{ex:diamond}. Edge labels show name and transfer cost.
Both consumers' cheapest routes share edge $a$; $C_1$ has an alternate route
$\{b,d\}$ of cost $2$.}
\label{fig:diamond}
\end{figure}

\paragraph{Route-CCA under cheapest-route enumeration} If each bidder
enumerates only its cheapest route, both name a bundle containing $a$. The
clearing solves $\max\{10x_1+8x_2:\ x_1+x_2\le 1\}$ and serves only $C_1$, resulting in a welfare of $10$ units, and $C_2$ is left unserved although the network could carry it.

\paragraph{OD mechanism} With OD bids $C_1{:}(1{\to}4,10)$ and
$C_2{:}(1{\to}5,8)$, the WDP routes $C_1$ over the costlier $\{b,d\}$ (net
$10-2=8$) and $C_2$ over $\{a,e\}$ (net $8$), serving both resulting in a  welfare of $16$ units.

The gap of $6$ is caused solely by bounded enumeration: had $C_1$ also listed $\{b,d\}$
the route-CCA would reach $16$ too. Since the number of routes is exponential,
no fixed enumeration is safe in general; the OD mechanism attains the
complete-enumeration optimum at polynomial size, because the WDP performs the
substitution across paths that the bidder would otherwise have to anticipate.
\end{example}

\begin{remark}
Example~\ref{ex:diamond} illustrates the efficiency deficit implied by the \emph{bidding language} and nothing else. In this particular case
its VCG payments are in fact core-stable (the contested resource is a single
edge, a substitution rather than a complementarity conflict). The
\emph{payment} defect requires a different structure, exhibited in
Section~\ref{subsec:dedicated}.
\end{remark}

\subsection{Dedicated sources: VCG may leave the core}\label{subsec:dedicated}

We now show that the source-indifference restriction in
Proposition~\ref{prop:vcg-core} cannot be dropped from its guarantee: in
the general model, VCG can be strictly outside the core. The mechanism is
complementarity. In $\Rr$ the no-opposing-flows lemma (Lemma~\ref{lem:noopposing})
rests on source indifference. Consumer-dependent access removes that
indifference and can therefore reintroduce the complementarity exhibited below;
inlet-specific caps, as in Example~\ref{ex:submodfail}, are another way to do so.

\begin{example}[Series network with dedicated entry]\label{ex:series}
Series network $1\to2\to3$ with edges $e_1{:}1{\to}2$, $e_2{:}2{\to}3$, each of
capacity $1$ and transfer cost $0$ (Figure~\ref{fig:series}). Node $1$ carries
gas usable by all; node $2$ carries a \emph{dedicated} entry point available
only to the bidder positioned to inject there (a distinct incoming pipeline or
liquefied natural gas (LNG) berth, not an open source). Three unit licences of value $2$:
\[
   X=(1\to3),\qquad Y=(1\to2),\qquad Z=(2\to3).
\]
$X$ needs both edges; $Y$ needs $e_1$ alone; $Z$ needs $e_2$ alone. Because the
node-$2$ entry is dedicated, $X$ cannot bypass $e_1$ by sourcing at node $2$ --
the maneuver that, in $\Rr$, would dissolve the complementarity.

\begin{figure}[h]
\centering
\begin{tikzpicture}[>=Stealth,
   every node/.style={circle,draw,minimum size=7mm,inner sep=0pt}]
  \node (n1) {$1$};
  \node (n2) [right=24mm of n1] {$2$};
  \node (n3) [right=24mm of n2] {$3$};
  \draw[->] (n1) -- node[above,draw=none,fill=none,inner sep=1pt]{$e_1$} (n2);
  \draw[->] (n2) -- node[above,draw=none,fill=none,inner sep=1pt]{$e_2$} (n3);
\end{tikzpicture}
\caption{Example~\ref{ex:series}. $X$ bids $1\to3$ (both edges), $Y$ bids
$1\to2$, $Z$ bids $2\to3$.}
\label{fig:series}
\end{figure}

\paragraph{WDP} With $f_X+f_Y\le 1$ on $e_1$ and $f_X+f_Z\le1$ on $e_2$,
maximising $2(f_X+f_Y+f_Z)$ gives $f_Y=f_Z=1$, $f_X=0$: welfare $4$.

\paragraph{VCG} Removing $Y$, the others' best is $2$ (either $X$ takes the pair
or $Z$ takes $e_2$); with $Y$ present they already obtain $2$, so $Y$'s
externality is $0$. By symmetry $Z$ pays $0$. Total VCG revenue is $0$.

\paragraph{Core violation} The coalition $\{0,X\}$ can guarantee itself $2$
($X$ would pay up to $2$ for the two edges), so the core requires
$p_Y+p_Z\ge 2$. VCG collects $0<2$: the VCG outcome is \emph{strictly outside
the core}. Here $\wval$ is not submodular -- $X$'s two edges are complements --
which is possible precisely because the node-$2$ entry is dedicated, so
Lemma~\ref{lem:noopposing} and hence Proposition~\ref{prop:vcg-core} do not
apply.
\end{example}

\begin{proposition}[The source-indifference boundary]\label{prop:necessity}
A strict VCG core violation cannot occur in the restricted model $\Rr$.
Once source indifference is dropped, it may occur through consumer-dependent
source access or prices, or through inlet-specific quantity restrictions.
Consumer-dependent source access can therefore produce a failure, but is not
necessary for one.
\end{proposition}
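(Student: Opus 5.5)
The statement has three parts, and I will prove them in order.

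\emph{Part 1: no strict violation in $\Rr$.} This follows from Proposition~\ref{prop:vcg-core} with one check. Lemma~\ref{lem:submod} gives monotonicity and submodularity of $\wval$ over $I$. I will then verify the core inequalities directly for the VCG outcome rather than rely only on the citation. Bidder $i$'s VCG utility is $u_i=\wval(I)-\wval(I\setminus\{i\})$, and the auctioneer receives $u_0=\wval(I)-\sum_{i\in I}u_i$. For any $S\subseteq I$, order $I\setminus S=\{j_1,\dots,j_k\}$ and telescope
\[
\wval(I)-\wval(S)=\sum_{\ell}\bigl[\wval(S\cup\{j_1,\dots,j_\ell\})-\wval(S\cup\{j_1,\dots,j_{\ell-1}\})\bigr].
\]
By submodularity each increment is at least $\wval(I)-\wval(I\setminus\{j_\ell\})=u_{j_\ell}$. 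This gives $u_0+\sum_{i\in S}u_i=\wval(I)-\sum_{j\notin S}u_j\ge\wval(S)$. Feasibility holds by definition of $u_0$. Individual rationality holds because monotonicity gives $u_i\ge0$, and $u_0\ge0$ is the case $S=\emptyset$.

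One subtlety needs care: payments here mean capacity payments. The pass-through source and transfer costs are netted out, because $\wval$ is already net welfare. So the $u_i$ above are exactly bidder payoffs under Definition~\ref{def:vcg}.

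\emph{Part 2: failures occur once source indifference is dropped.} I will exhibit two witnesses. For consumer-dependent access, I invoke Example~\ref{ex:series}. There the node-$2$ entry is a source with $c^s_{2,i}=+\infty$ for $i\ne Z$, which is an instance of Definition~\ref{def:general}. That example already shows VCG revenue $0$ while the core requires $p_Y+p_Z\ge2$. For inlet-specific caps, I use the network of Example~\ref{ex:submodfail}. All sources there are open, uniformly priced and zero-cost, so access is \emph{not} consumer-dependent; only the per-inlet caps break indifference.

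Non-submodularity alone does not produce a core violation, so I must compute the VCG payoffs and exhibit a blocking coalition. I first recompute $\wval$ for all eight coalitions. From the values given, $\wval(Y,Z)$ and the singletons $\wval(Y)$ and $\wval(Z)$ are still needed. I expect $\wval(I)=2$ to be attained with $X$ served by $1$ unit from $s_A$ over $e_1,e_2$ and $Y$ served from $s_B$ up to its $\tfrac14$ cap plus $s_A$ via $e_1$. That routing conflicts on $e_1$, so I will resolve the optimal routing by hand. I then compute each $u_i=\wval(I)-\wval(I\setminus\{i\})$, set revenue $=\wval(I)-\sum u_i$, and check $u_0+\sum_{i\in S}u_i<\wval(S)$ for some $S$, most likely $S=\{X\}$ or a pair.

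\emph{Part 3: access is not necessary.} This is then immediate, since the second witness has consumer-independent access.

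\emph{Main obstacle.} The risk is that the cap example is submodular-violating but still VCG-core-stable. If so, I will modify it into a series-type construction in the style of Example~\ref{ex:series}. I would give $X$ a licence with a cap of $0$ at the node-$2$ inlet, so that $X$ is forced to use both edges. I would let $Y$ and $Z$ be capped so that $Y$ uses only $e_1$ and $Z$ uses only $e_2$, with all sources openly priced. This reproduces the complementarity and the revenue-$0$ versus required-$2$ violation verbatim.
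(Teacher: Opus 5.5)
Your overall structure matches the paper's proof: Proposition~\ref{prop:vcg-core} for $\Rr$, Example~\ref{ex:series} for dedicated access, and Example~\ref{ex:submodfail} for inlet caps. Your telescoping derivation of VCG core membership from submodularity is a correct, self-contained substitute for the citation.

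\textbf{The gap is the inlet-cap witness.} This is the only part of the statement that needs new work. You rightly note that non-submodularity alone does not give a core violation, but you never compute the VCG outcome. So the claim that caps alone can cause a strict violation, and with it Part~3, is left unproven.

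Your anticipated grand-coalition optimum is also wrong. Serving $X$ fully forces at least $\tfrac34$ of its gas through $e_1$, since its $s_B$ cap is $\tfrac14$. That leaves $Y$ at most $\tfrac12$ and $Z$ nothing, so welfare is at most $\tfrac32<2$.

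You need no routing at all, because your own formula $u_i=\wval(I)-\wval(I\setminus\{i\})$ does not depend on the allocation. This also sidesteps the fact that the optimum is not unique. The only coalition value not already stated in the example is $\wval(Y,Z)=2$: $Y$ takes one unit from $s_A$ over $e_1$, and $Z$ takes one unit from $s_B$ over $e_2$. Using the given $\wval(I)=2$, $\wval(X,Z)=1$ and $\wval(X,Y)=\tfrac32$:
\begin{itemize}
\item $u_X=0$, $u_Y=1$ and $u_Z=\tfrac12$;
\item hence $u_0=2-\tfrac32=\tfrac12<1=\wval(X)$, so $\{0,X\}$ blocks.
\end{itemize}
This is exactly the paper's computation, which gives $p_Y=0$ and $p_Z=\tfrac12$.

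\textbf{Your fallback would not rescue Part~3.} A licence cap of $0$ at an inlet is, in the paper's encoding (``a bidder that cannot use source $s$ names no license originating there''), the same as denying that bidder access, i.e.\ $c^s_{s,i}=+\infty$. The resulting LP is literally Example~\ref{ex:series}. A zero-cap construction therefore only relabels the dedicated-access failure and says nothing about failures without consumer-dependent access.

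What makes Example~\ref{ex:submodfail} a genuine cap-only witness is different. Every bidder can draw a strictly positive amount from every source at the same (zero) price, and only the per-inlet quantity limits differ.
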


\begin{proof}
Proposition~\ref{prop:vcg-core} rules out a violation in $\Rr$, whereas
Example~\ref{ex:series} supplies one with dedicated access. For inlet-specific
caps, use Example~\ref{ex:submodfail}, where additionally
$\wval(Y,Z)=2$. At the grand-coalition optimum bidders $Y,Z$ are served and
$X$ is not. Their VCG payments are

\[
 p_Y=\wval(X,Z)-(2-1)=0,
 \qquad
 p_Z=\wval(X,Y)-(2-1)=\tfrac12.
\]
The auctioneer therefore receives $1/2$, while the unserved bidder $X$
has utility zero. But coalition $\{0,X\}$ can attain
$\wval(X)=1>1/2$, so it blocks. This failure uses zero, hence
consumer-independent, source prices; it is caused solely by the
inlet-specific caps.
\end{proof}

\subsection{Dual (nodal) pricing}\label{subsec:dual}

Since VCG can leave the core in the general model, and since core stability is a
natural requirement for a regulated allocation (no coalition of the regulator
with a group of shippers should be able to improve on the outcome), we adopt a
payment rule that targets the core directly, following the core-selecting
philosophy of \citet{DayMilgrom2008}. The rule reads prices off the dual of the
WDP.

\subsubsection{The rule}

Let $(\pi^\star,\lambda^{+\star},\lambda^{-\star})$ be an optimal dual of
\eqref{eq:wdp}. The conservation duals induce \emph{nodal prices}: for a licence
$b\in\mathcal B_i$ with inlet $o_b$, delivered to bidder $i$'s sink $n_i$, let
\[
   \mu^\star_b=c^s_{o_b,i}+\pi^\star_{i,n_i}-\pi^\star_{i,o_b}
\]
be the delivered marginal price of a unit entering at $o_b$ and consumed at
$n_i$. Bidder $i$
pays
\begin{equation}\label{eq:lmp}
   p_i \;=\; \sum_{b\in\mathcal B_i} d^\star_b\,\mu^\star_b,
\end{equation}

the nodal spread on each licence times the quantity delivered on it. Of this
payment, the source and transfer components of $\mu^\star_b$ are forwarded to the
suppliers and the local TSOs (per the third-party-access split of
Section~\ref{sec:prelim}); the auctioneer retains only the congestion component,
so its total income (as shown later in Theorem~\ref{thm:dual-core}) is
\[
   \sum_i CC_i \;=\; \sum_{a\in\pipes} \bigl(\lambda^{+\star}_a\,\bar q^+_a
      +\lambda^{-\star}_a\,\bar q^-_a\bigr) \;\ge\; 0,
\]
the scarcity rent of the binding capacity constraints. In particular, when no
capacity binds ($\lambda^{+\star}=\lambda^{-\star}\equiv 0$) every payment passes through to
suppliers and TSOs and the auctioneer collects nothing -- a capacity auction
earns only the scarcity it actually rations, and is revenue-adequate on its
capacity account without ever being subsidised.

\begin{remark}[Why nodal, not per-edge]
Per-bidder flow on an edge is not determined by the WDP (equivalent optima
reshuffle routing among same-OD bidders), so a per-edge charge is ill-posed --
and under netting it can even change sign with counterflow. The nodal charge
\eqref{eq:lmp} depends only on the accepted OD quantities and the node prices;
for any selected optimal dual, the charge is decomposition-invariant. This is
the locational-marginal-pricing convention of electricity markets adapted to
capacity licences.
\end{remark}

\subsubsection{Core-stability}

\begin{remark}[Intuition]\label{rem:ce-intuition}
Core stability of the dual prices rests on a competitive-equilibrium argument.
By LP strong duality, an optimal allocation and an optimal dual satisfy
complementary slackness. At the nodal prices $\mu^\star$, each bidder's accepted
bundle maximises its surplus $V_i(d_i)-\mu^\star\!\cdot d_i$ over $P_i$, and the
optimal flow maximises the network's profit, i.e.\ its revenue at the nodal
prices net of source and transfer costs, which equals the congestion rent. That
the dual of a linear allocation program supplies equilibrium prices supporting
a core outcome is the package-assignment view of combinatorial markets
\citep{bikhchandani2002package}, rooted in the assignment game, where
competitive prices coincide with the core \citep{shapley1971assignment}. The
welfare of any coalition can be split into the surplus of its members at the
equilibrium prices and the network's profit at the same prices; neither part
can exceed its equilibrium value, so no coalition can block.
Theorem~\ref{thm:dual-core} makes this argument precise for both capacity
conventions. Since it relies only on dual feasibility and complementary
slackness, it holds for every optimal dual and requires no restriction on
source access. The same argument applies to the restricted problem
\eqref{eq:wdp-r}, where the nodal price paid by bidder $i$ is
$\pi^\star_{n_i}$.
\end{remark}

\noindent For bidder $i$ we write
$\mu^\star\!\cdot d_i=\sum_{b\in\mathcal B_i}\mu^\star_b d_b$, the
licence-indexed inner product, so that $p_i=\mu^\star\!\cdot d^\star_i$; for a
delivery profile $y=(y_i)_{i\in I}$, $\mu^\star\!\cdot y=\sum_i\mu^\star\!\cdot y_i$.

\begin{theorem}[Dual pricing is core-stable]\label{thm:dual-core}
In the general model, under either the netting or the directed-arc capacity
convention, let $(d^\star,f^\star)$, together with $g^\star$ under netting, be
an optimal WDP allocation, and choose any optimal dual of the corresponding
WDP. The payments \eqref{eq:lmp}
obtained from its conservation potentials support a core outcome.
\end{theorem}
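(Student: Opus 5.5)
The argument is the competitive-equilibrium argument sketched in Remark~\ref{rem:ce-intuition}, made quantitative via LP duality. We fix an optimal primal $(d^\star,f^\star,g^\star)$ and an optimal dual $(\pi^\star,\tau^\star,\lambda^{\pm\star},\alpha^\star,\beta^\star)$ of \eqref{eq:wdp}. Payoffs are set to $u_i=V_i(d^\star_i)-\mu^\star\cdot d^\star_i$ and $u_0=\sum_i CC_i$, where $CC_i$ is the part of $p_i$ not passed through to suppliers and TSOs. We must establish three things: feasibility ($\sum_i u_i+u_0=\wval(I)$), individual rationality ($u_i\ge0$), and the blocking inequality $u_0+\sum_{i\in S}u_i\ge\wval(S)$ for every $S$. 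Throughout, the dual is written with the sign convention of \eqref{eq:wdp}. For any $\pi$ and any bidder $i$, summing the conservation rows against $\pi_{i,\cdot}$ shows that the potential difference accumulated along $i$'s flow equals $\sum_b(\pi_{i,n_i}-\pi_{i,o_b})d_b$.

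\textbf{Step 1 (bidder surplus).} The dual constraint for $d_b$ reads $\alpha_b+\beta_i\ge \unitval_b-c^s_{o_b,i}-(\pi_{i,n_i}-\pi_{i,o_b})=\unitval_b-\mu^\star_b$. Exactly as in the proof of Lemma~\ref{lem:submod}, the bidder block $D_i$ is then the LP dual of $\max_{d_i\in P_i}\{V_i(d_i)-\mu^\star\cdot d_i\}$. This gives $\psi_i:=\max_{d_i\in P_i}\{V_i(d_i)-\mu^\star\cdot d_i\}\le \sum_b q_b\alpha^\star_b+Q_i^T\beta^\star_i$. Complementary slackness for $d_b,\alpha_b,\beta_i$ upgrades this to equality and shows that $d^\star_i$ attains $\psi_i$, so $u_i=\psi_i\ge0$ because $d_i=0$ is feasible.

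\textbf{Step 2 (network profit).} For any flow $(f,g)$ that is feasible for the capacity and net-flow constraints (with $f$ unrestricted by the $d$'s), define the network profit $R(f,g)=\sum_{i}\sum_{a}(\pi^\star\text{-differences})\cdot f_{i,a}-\text{transfer cost}(g)$. Here the first term is the revenue $\sum_i\sum_b(\pi^\star_{i,n_i}-\pi^\star_{i,o_b})d_b$ whenever $f$ routes $d$. The dual constraints for $f^\pm_{i,a}$ and $g^\pm_a$ (with $\tau_a$) say that the reduced costs are nonpositive after charging $\lambda^\star$. Hence $R(f,g)\le\sum_a(\lambda^{+\star}_a\bar q^+_a+\lambda^{-\star}_a\bar q^-_a)=:\Lambda$, and by complementary slackness this holds with equality at $(f^\star,g^\star)$. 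Since $p_i$ minus its source and transfer components equals $CC_i$, this yields $u_0=\Lambda$. For the directed-arc convention the same computation applies without $\tau,g$.

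\textbf{Step 3 (assembly).} Feasibility follows because $\wval(I)=\sum_iV_i(d^\star_i)-\text{costs}=\sum_iu_i+\Lambda$, obtained by adding and subtracting $\mu^\star\cdot d^\star$. For blocking, take any $S$ and any optimal solution $(d^S,f^S,g^S)$ of the WDP for $S$. Pad it to the full index set with $d_i=0$ and $f_i=0$ for $i\notin S$; this keeps all dual constraints valid, since the dual of the $S$-problem is the restriction of the $I$-dual. Then
\[
\wval(S)=\sum_{i\in S}\bigl(V_i(d^S_i)-\mu^\star\cdot d^S_i\bigr)+R(f^S,g^S)\le\sum_{i\in S}\psi_i+\Lambda=\sum_{i\in S}u_i+u_0 .
\]
The main obstacle is the bookkeeping in this identity: we must verify that adding and subtracting $\mu^\star\cdot d^S$ splits the welfare of $S$ exactly into bidder surplus plus network profit. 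This requires the telescoping identity for $\pi^\star$ along bidder-specific commodities. Under netting it also requires that the $\tau_a$ coupling between the $f$'s and the $g$'s be handled so that commodity-wise potential differences convert into the costs and capacities borne by $g$. I would first carry out this check by writing the Lagrangian of \eqref{eq:wdp} at the fixed dual and observing that it is separable into bidder terms and network terms. Independence of the flow decomposition, as noted in the remark on nodal pricing, then follows because only $d$ and $\pi^\star$ enter the payments.
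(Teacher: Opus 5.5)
Your proposal is correct and follows essentially the same route as the paper's proof: bidder-block duality with complementary slackness gives $u_i=U_i(\mu^\star)\ge0$, a reduced-cost bound shows the network's profit at $\mu^\star$ never exceeds the congestion rent $\sum_a(\lambda_a^{+\star}\bar q_a^++\lambda_a^{-\star}\bar q_a^-)$ and attains it at the optimal flow, and any coalition's welfare splits into these two parts. The paper settles the netting bookkeeping you flag by noting that the dual constraints of $f^+_{i,a}$ and $f^-_{i,a}$ are opposite inequalities, which forces $\tau_a=\pi_{i,v_a}-\pi_{i,u_a}$ for every $i$ (your one-sided inequalities already suffice); note also that your zero-padding step needs primal feasibility of the padded coalition flow in the grand-coalition network, not validity of dual constraints.
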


\begin{proof}
We give the duals and the complementary-slackness argument explicitly.
Throughout the proof, $i\in I$, $b\in\mathcal B_i$, and
$a\in\pipes$ has reference orientation $a^+=(u_a,v_a)$ and reverse
orientation $a^-=(v_a,u_a)$. The dual variables of the atomic and total bid
bounds are $\alpha_b\ge0$ and $\beta_i\ge0$. The conservation potentials
$\pi_{i,n}$ are free. Define

\begin{equation}\label{eq:mu-dual-proof}
 \mu_b(\pi)=c^s_{o_b,i}+\pi_{i,n_i}-\pi_{i,o_b}.
\end{equation}

Under netting, the exact dual of the grand-coalition instance of
\eqref{eq:wdp} is

\begin{equation}\label{eq:dual-netting}
\begin{aligned}
(D_{\rm net})\qquad
\min_{\pi,\tau,\lambda,\alpha,\beta}\quad
 &\sum_{a\in\pipes}
     (\bar q_a^+\lambda_a^++\bar q_a^-\lambda_a^-)
  +\sum_{i\in I}\left(
       \sum_{b\in\mathcal B_i}q_b\alpha_b+Q_i^T\beta_i\right)\\
\mathrm{s.t.}\quad
 &\alpha_b+\beta_i+\mu_b(\pi)\ge \unitval_b
       &&\forall i,\ b\in\mathcal B_i,\\
 &\tau_a+\pi_{i,u_a}-\pi_{i,v_a}\ge0
       &&\forall i,a,\\
 &-\tau_a+\pi_{i,v_a}-\pi_{i,u_a}\ge0
       &&\forall i,a,\\
 &\lambda_a^+-\tau_a\ge-c_a^{t,+}
       &&\forall a,\\
 &\lambda_a^-+\tau_a\ge-c_a^{t,-}
       &&\forall a,\\
 &\lambda^+,\lambda^-,\alpha,\beta\ge0,
       \qquad \pi,\tau\ \text{free}.
\end{aligned}
\end{equation}
Indeed, the five displayed constraint groups are respectively the dual constraints associated with the primal variables $d_b,f_{i,a}^+,f_{i,a}^-,g_a^+$, and $g_a^-$. Notice
that the second and third inequalities are opposites. Hence every feasible
netting dual satisfies

\begin{equation}\label{eq:tau-potential}
 \tau_a=\pi_{i,v_a}-\pi_{i,u_a}\qquad\forall i,a.
\end{equation}

Under separate directed-arc capacities, the net-flow variables $g^\pm$
and $\tau$ are absent, transport cost is paid on the bidder flows, and the
exact dual is

\begin{equation}\label{eq:dual-directed}
\begin{aligned}
(D_{\rm dir})\qquad
\min_{\pi,\lambda,\alpha,\beta}\quad
 &\sum_{a\in\pipes}
     (\bar q_a^+\lambda_a^++\bar q_a^-\lambda_a^-)
  +\sum_{i\in I}\left(
       \sum_{b\in\mathcal B_i}q_b\alpha_b+Q_i^T\beta_i\right)\\
\mathrm{s.t.}\quad
 &\alpha_b+\beta_i+\mu_b(\pi)\ge \unitval_b
       &&\forall i,\ b\in\mathcal B_i,\\
 &\lambda_a^++\pi_{i,u_a}-\pi_{i,v_a}\ge-c_a^{t,+}
       &&\forall i,a,\\
 &\lambda_a^-+\pi_{i,v_a}-\pi_{i,u_a}\ge-c_a^{t,-}
       &&\forall i,a,\\
 &\lambda^+,\lambda^-,\alpha,\beta\ge0,
       \qquad \pi\ \text{free}.
\end{aligned}
\end{equation}

The primal is feasible (take every variable equal to zero), and its
objective is bounded above because every $P_i$ is bounded and all costs are
nonnegative. Hence both formulations have optimal primal and dual solutions,
strong duality holds, and any optimal primal--dual pair satisfies
complementary slackness. Fix such a pair and henceforth add $\star$ to all of
its variables.

\paragraph{Bidder optimality}
At prices $\mu^\star_b=\mu_b(\pi^\star)$, bidder $i$'s maximum attainable
surplus is

\begin{equation}\label{eq:bidder-surplus}
 U_i(\mu^\star)=
 \max_{d_i\in P_i}\sum_{b\in\mathcal B_i}(\unitval_b-\mu_b^\star)d_b.
\end{equation}
The exact dual of this bidder problem is

\begin{equation}\label{eq:bidder-surplus-dual}
 \min_{\alpha,\beta}
 \left\{\sum_{b\in\mathcal B_i}q_b\alpha_b+Q_i^T\beta_i:
 \alpha_b+\beta_i\ge \unitval_b-\mu_b^\star,\quad
 \alpha_b,\beta_i\ge0\right\}.
\end{equation}
This is exactly bidder $i$'s block of either global dual. The relevant
complementary-slackness equations are

\begin{align}
 \alpha_b^\star(q_b-d_b^\star)&=0,
 &\beta_i^\star\left(Q_i^T-\sum_b d_b^\star\right)&=0,
 \label{eq:cs-bid-caps}\\
 d_b^\star
 \bigl(\alpha_b^\star+\beta_i^\star+\mu_b^\star-\unitval_b\bigr)&=0.
 \label{eq:cs-bid-demand}
\end{align}
They show that $d_i^\star$ and
$((\alpha_b^\star)_{b\in\mathcal B_i},\beta_i^\star)$ are
feasible for \eqref{eq:bidder-surplus} and \eqref{eq:bidder-surplus-dual} and
have equal objectives. Consequently $d_i^\star$ solves the bidder problem, and

\begin{equation}\label{eq:bidder-utility}
 u_i=V_i(d_i^\star)-\mu^\star\!\cdot d_i^\star
     =U_i(\mu^\star)\ge0,
\end{equation}
where nonnegativity follows because $d_i=0$ is feasible.

\paragraph{Network optimality under netting}
Call a nonnegative profile $y=(y_i)_{i\in I}$ \emph{deliverable} if the
network constraints admit flows that inject and withdraw its quantities; the
bid-region constraints $y_i\in P_i$ are not imposed on this network production
set. Let $y$ be any such profile, and
let $\widehat f,\widehat g$ be any netting-feasible flows delivering it.
Multiplying bidder $i$'s conservation equations by $\pi_{i,n}^\star$ and
summing over nodes gives the identity

\begin{equation}\label{eq:potential-identity}
 \sum_{b\in\mathcal B_i}
   (\pi_{i,n_i}^\star-\pi_{i,o_b}^\star)y_b
 =\sum_{a\in\pipes}\left[
   (\pi_{i,v_a}^\star-\pi_{i,u_a}^\star)\widehat f_{i,a}^+
  +(\pi_{i,u_a}^\star-\pi_{i,v_a}^\star)\widehat f_{i,a}^-\right].
\end{equation}
Using \eqref{eq:mu-dual-proof}, \eqref{eq:tau-potential}, and the net-flow
identity, the network's revenue net of source and transfer cost is therefore

\begin{align}
 &\mu^\star\!\cdot y
 -\sum_{i,b}c^s_{o_b,i}y_b
 -\sum_a(c_a^{t,+}\widehat g_a^+
             +c_a^{t,-}\widehat g_a^-)\notag\\
 &=\sum_a\left[(\tau_a^\star-c_a^{t,+})\widehat g_a^+
              +(-\tau_a^\star-c_a^{t,-})\widehat g_a^-\right]\notag\\
 &\le\sum_a(\lambda_a^{+\star}\widehat g_a^+
             +\lambda_a^{-\star}\widehat g_a^-)
 \le\sum_a(\lambda_a^{+\star}\bar q_a^+
             +\lambda_a^{-\star}\bar q_a^-).
 \label{eq:network-bound-netting}
\end{align}
The first inequality is precisely the last two constraint groups of
$D_{\rm net}$, and the second uses primal capacity feasibility. At the optimal
flow both inequalities are equalities. Indeed, complementary slackness gives

\begin{align*}
 g_a^{+\star}(\lambda_a^{+\star}-\tau_a^\star+c_a^{t,+})&=0,
 &\lambda_a^{+\star}(\bar q_a^+-g_a^{+\star})&=0,\\
 g_a^{-\star}(\lambda_a^{-\star}+\tau_a^\star+c_a^{t,-})&=0,
 &\lambda_a^{-\star}(\bar q_a^--g_a^{-\star})&=0.
\end{align*}

\paragraph{Network optimality under directed capacities}
For any directed-feasible $(y,\widehat f)$, the same potential identity and
the two arc constraints of $D_{\rm dir}$ yield

\begin{align}
 &\mu^\star\!\cdot y
 -\sum_{i,b}c^s_{o_b,i}y_b
 -\sum_{i,a}(c_a^{t,+}\widehat f_{i,a}^+
             +c_a^{t,-}\widehat f_{i,a}^-)\notag\\
 &\le\sum_a\left(
      \lambda_a^{+\star}\sum_i\widehat f_{i,a}^+
     +\lambda_a^{-\star}\sum_i\widehat f_{i,a}^-\right)
 \le\sum_a(\lambda_a^{+\star}\bar q_a^+
             +\lambda_a^{-\star}\bar q_a^-).
 \label{eq:network-bound-directed}
\end{align}
For example, the coefficient of $\widehat f_{i,a}^+$ on the first line is
$\pi_{i,v_a}^\star-\pi_{i,u_a}^\star-c_a^{t,+}$, which is at most
$\lambda_a^{+\star}$ by $D_{\rm dir}$; the reverse direction is identical.
Complementary slackness,

\begin{align*}
 f_{i,a}^{+\star}
 (\lambda_a^{+\star}+\pi_{i,u_a}^\star-\pi_{i,v_a}^\star+c_a^{t,+})&=0,
 &\lambda_a^{+\star}
 (\bar q_a^+-\sum_i f_{i,a}^{+\star})&=0,\\
 f_{i,a}^{-\star}
 (\lambda_a^{-\star}+\pi_{i,v_a}^\star-\pi_{i,u_a}^\star+c_a^{t,-})&=0,
 &\lambda_a^{-\star}
 (\bar q_a^--\sum_i f_{i,a}^{-\star})&=0,
\end{align*}
makes both inequalities in \eqref{eq:network-bound-directed} equalities at
the optimal flow.

Thus, under either convention, if $C(y)$ denotes the minimum source and
transport cost of delivering $y$, then

\begin{equation}\label{eq:network-profit}
 R(\mu^\star):=\max_{y\ \mathrm{deliverable}}
       \{\mu^\star\!\cdot y-C(y)\}
 =\sum_{a\in\pipes}
   (\lambda_a^{+\star}\bar q_a^+
    +\lambda_a^{-\star}\bar q_a^-).
\end{equation}
The maximum is attained by $d^\star$ and its optimal flow. Since payments
cover source and transfer costs as pass-throughs, the auctioneer retains exactly
this network profit:

\begin{equation}\label{eq:auctioneer-profit}
 u_0=\sum_i p_i-C(d^\star)=R(\mu^\star)\ge0.
\end{equation}

\paragraph{The core inequalities}
Equations \eqref{eq:bidder-utility} and \eqref{eq:auctioneer-profit}, along
with strong duality, give payoff feasibility:

\[
 u_0+\sum_{i\in I}u_i
 =\sum_{i\in I}V_i(d_i^\star)-C(d^\star)
 =\wval(I).
\]
Now fix a coalition $S\subseteq I$ and any profile
$d=(d_i)_{i\in S}$ that it can deliver, setting $d_i=0$ outside $S$. Then

\begin{align*}
 \sum_{i\in S}V_i(d_i)-C(d)
 &=\sum_{i\in S}
     \bigl(V_i(d_i)-\mu^\star\!\cdot d_i\bigr)
   +\bigl(\mu^\star\!\cdot d-C(d)\bigr)\\
 &\le\sum_{i\in S}U_i(\mu^\star)+R(\mu^\star)
 =\sum_{i\in S}u_i+u_0.
\end{align*}
The first bound uses bidder optimality separately for every $i\in S$ and
network optimality for the coalition's deliverable profile. Maximising the left
side over all coalition-feasible profiles gives

\[
 \wval(S)\le u_0+\sum_{i\in S}u_i,
\]
which is the core constraint. Together with payoff feasibility and
individual rationality, this proves the result in the general model.

\end{proof}

\begin{remark}[Degeneracy]\label{rem:degen}
Theorem~\ref{thm:dual-core} shows every optimal dual yields a \emph{core} point;
under degeneracy the dual is non-unique, so the \emph{payment vector} is not
unique -- a choice \emph{among} core points, not a threat to core membership. A
canonical selection (lexicographic, or the bidder-optimal core point
of \citet{DayMilgrom2008}, computed for instance by the quadratic
core-selecting rule of \citet{day2012quadratic}) can be layered on top for
determinacy and fairness.
\end{remark}

\subsubsection{Illustrative examples}

We reuse the series network of Example~\ref{ex:series} and then exhibit the
counter-directed (credit) case.
In the following examples, we work with reduced WDP formulations obtained by eliminating flow variables through conservation equations and omitting redundant constraints. These formulations are exact projections of the corresponding full WDPs. Moreover, the optimal reduced dual solutions considered below admit objective-preserving extensions to the full WDP duals, as verified by the explicit construction of conservation potentials. Consequently, the resulting dual prices are valid for the original market.

\begin{example}[Dual pricing repairs Example~\ref{ex:series}]
\label{ex:series-fixed}

Let us first recall the primal problem  written in full in the case of example \label{ex:series-fixed}.
Using $x,y,z\in[0,1]$ for the accepted quantities (each also the flow of
that bidder), the WDP (after dropping redundant constraints) becomes
\begin{equation}\label{eq:wdp-series}
\begin{aligned}
   \wval(I)=\max\ & 2x+2y+2z\\
   \text{s.t.}\ & x+y\le 1\ \ (\lambda_1),\qquad x+z\le 1\ \ (\lambda_2),\qquad
   0\le x,y,z\le 1\ \ (\alpha_X,\alpha_Y,\alpha_Z).
\end{aligned}
\end{equation}
As discussed earlier, the unique optimum is
\[
   x^\star=0,\qquad y^\star=z^\star=1,\qquad \wval(I)=4 .
\]

The LP dual of \eqref{eq:wdp-series} is
\begin{equation}\label{eq:dual-series}
\begin{aligned}
\min\ & \lambda_1+\lambda_2+\alpha_X+\alpha_Y+\alpha_Z\\
\text{s.t.}\ & \alpha_X+\lambda_1+\lambda_2\ge 2 \qquad \\
             & \alpha_Y+\lambda_1\ge 2, \qquad\qquad\ \
             & \alpha_Z+\lambda_2\ge 2 \qquad\qquad\ \\
             & \lambda,\alpha\ge 0 .
\end{aligned}
\end{equation}
Using the optimal primal, complementary slackness determines the
optimal duals:
\begin{enumerate}[label=(\roman*),leftmargin=2.2em]
\item $y^\star=1>0$: the $y$-constraint is tight, $\alpha_Y=2-\lambda_1$; since
$\alpha_Y\ge 0$, $\lambda_1\le 2$. Likewise $z^\star>0$ gives
$\alpha_Z=2-\lambda_2$ and $\lambda_2\le 2$.
\item $x^\star=0<1$: the cap of $X$ is slack, so $\alpha_X=0$, and dual
feasibility for $x$ becomes $\lambda_1+\lambda_2\ge 2$.
\item Both capacity constraints are tight at the primal optimum, so
$\lambda_1,\lambda_2$ may be positive.
\end{enumerate}
For any such $(\lambda_1,\lambda_2)$ the dual objective equals
$\lambda_1+\lambda_2+0+(2-\lambda_1)+(2-\lambda_2)=4=\wval(I)$, so all of them are
optimal. Hence the set of optimal duals is
\begin{equation}\label{eq:dualset-series}
   \Lambda^\star=\bigl\{(\lambda_1,\lambda_2):\ 0\le\lambda_1\le 2,\
   0\le\lambda_2\le 2,\ \lambda_1+\lambda_2\ge 2\bigr\},
\end{equation}
with $\alpha_X=0$, $\alpha_Y=2-\lambda_1$, $\alpha_Z=2-\lambda_2$. In words: each
edge price is at most $2$ because the accepted bidder who uses it alone values
it at $2$, and the two prices together are at least $2$
because the rejected bidder $X$, who needs both, values the pair at $2$. The dual is degenerate: the primal optimum is a vertex at
which more constraints are tight than there are variables.

Let us use the following two observations

\begin{itemize}
    \item O1: Complementary
slackness in the general directed WDP for a flow variable says: if bidder $i$ ships a positive amount on
arc $(u,v)$, then
\[
   \pi_{i,v}-\pi_{i,u}=c^t_{(u,v)}+\lambda_{(u,v)}.
\]
Summing along the path a
unit actually uses, the potential difference between inlet and
sink equals \emph{total transfer cost plus total congestion price along the
path}. Hence
\[
   \mu_b=\underbrace{c^s_{o_b,i}}_{\text{supplier}}
        +\underbrace{\textstyle\sum_{e\in\text{path}}c^t_e}_{\text{TSOs}}
        +\underbrace{\textstyle\sum_{e\in\text{path}}\lambda_e}_{\text{capacity charge}} .
\]
Under netting, the dual forces one common price gap $\tau_a$ per physical pipe
$a$ for all bidders, and the same decomposition holds with $\tau_a$ in place of
$c^t+\lambda$ (a unit travelling \emph{against} the reference orientation picks
up $-\tau_a$).
    \item O2: For each licence, the dual has
$\alpha_b\ge 0$ (shadow value of its cap $q_b$) with
$\alpha_b+\beta_i+\mu_b\ge \unitval_b$, and complementary slackness gives:
\begin{itemize}[leftmargin=1.4em]
\item a licence accepted at a positive level satisfies
$\unitval_b=\mu_b+\alpha_b+\beta_i$, hence $\mu_b\le\unitval_b$ (no total caps
bind in the examples, so $\beta_i=0$);
\item a licence accepted strictly below its cap has $\alpha_b=0$, hence
$\mu_b\ge\unitval_b$ if rejected and $\mu_b=\unitval_b$ if partially accepted.
\end{itemize}
\end{itemize}

\paragraph*{Potentials and nodal prices}

All costs are zero, so by O1 a common potential can be chosen for all
bidders:
\[
   \pi_1=0,\qquad \pi_2=\lambda_1,\qquad \pi_3=\lambda_1+\lambda_2 .
\]
(Bidder-specific potentials could differ only for $X$, who ships nothing and
pays nothing, so this choice is without loss for the payments.) The nodal prices
are
\[
   \mu_X=\pi_3-\pi_1=\lambda_1+\lambda_2,\qquad
   \mu_Y=\pi_2-\pi_1=\lambda_1,\qquad
   \mu_Z=\pi_3-\pi_2=\lambda_2 ,
\]
with all source prices zero. O2 is visible directly: the rejected $X$ faces
$\mu_X\ge 2=\unitval_X$; the accepted $Y,Z$ face $\mu_Y,\mu_Z\le 2$.

\paragraph*{Payments, utilities and revenue}

Only $Y$ and $Z$ receive gas, so
\[
   p_Y=\lambda_1,\qquad p_Z=\lambda_2,\qquad p_X=0,
\]
and since there are no source or transfer costs, the payments are pure
capacity charges. Utilities and auctioneer revenue:
\[
   u_X=0,\qquad u_Y=2-\lambda_1,\qquad u_Z=2-\lambda_2,\qquad
   u_0=p_Y+p_Z=\lambda_1+\lambda_2 ,
\]
and $u_0=\lambda_1\cdot 1+\lambda_2\cdot 1$ is exactly the congestion rent
$\sum_e\lambda_e\bar q_e$. Payoffs add up to the welfare:
$u_0+u_X+u_Y+u_Z=(\lambda_1+\lambda_2)+0+(2-\lambda_1)+(2-\lambda_2)=4$.

\paragraph*{The core}

Write $p_Y,p_Z$ for the payments. Since the payoffs sum to $4$, the core
constraints $u_0+\sum_{i\in S}u_i\ge\wval(S)$ read:
\begin{center}
\renewcommand{\arraystretch}{1.2}
\begin{tabular}{l|c|c|l}
\toprule
coalition with $0$ & $u_0+\sum_{i\in S}u_i$ & $\wval(S)$ & condition\\
\midrule
$\{0\}$ & $p_Y+p_Z$ & 0 & $p_Y+p_Z\ge 0$\\
$\{0,X\}$ & $p_Y+p_Z$ & 2 & $\boldsymbol{p_Y+p_Z\ge 2}$\\
$\{0,Y\}$ & $2+p_Z$ & 2 & $p_Z\ge 0$\\
$\{0,Z\}$ & $2+p_Y$ & 2 & $p_Y\ge 0$\\
$\{0,X,Y\}$ & $2+p_Z$ & 2 & $p_Z\ge 0$\\
$\{0,X,Z\}$ & $2+p_Y$ & 2 & $p_Y\ge 0$\\
$\{0,Y,Z\}$ & 4 & 4 & ---\\
$\{0,X,Y,Z\}$ & 4 & 4 & --- (feasibility)\\
\bottomrule
\end{tabular}
\end{center}
Together with individual rationality ($u_Y,u_Z\ge0$, i.e.\ $p_Y,p_Z\le 2$), the
core in payment space is
\[
   \mathcal C=\bigl\{(p_Y,p_Z):\ 0\le p_Y\le 2,\ 0\le p_Z\le 2,\ p_Y+p_Z\ge 2\bigr\}.
\]
Comparing with \eqref{eq:dualset-series}: under $p_Y=\lambda_1$, $p_Z=\lambda_2$
\emph{the set of dual-price outcomes coincides with the core} in this example
(Figure~\ref{fig:core}). The only binding economic condition is the one for
$\{0,X\}$: the auctioneer together with the rejected bidder could sell the two
edges to $X$ for $2$, so $Y$ and $Z$ together must pay at least $2$.

\paragraph{Why VCG fails here} VCG charges $Y$ its externality. Without $Y$, the others obtain
$\wval(\{X,Z\})=2$; with $Y$ present they obtain
$\wval(I)-2=2$; hence $p_Y^{\rm VCG}=0$, and symmetrically
$p_Z^{\rm VCG}=0$. The point $(0,0)$ violates $p_Y+p_Z\ge 2$: the coalition
$\{0,X\}$ blocks. Each of $Y$ and $Z$ is individually not pivotal (removing
one of them, $X$ still cannot be served because the other occupies an edge), but
jointly they displace $X$. Dual prices cannot make this mistake, because the
price of the path $1\to3$ is $\mu_X=\lambda_1+\lambda_2$, which must cover
$X$'s value whenever $X$ is rejected.

\begin{figure}[h]
\centering
\begin{tikzpicture}[scale=1.4]
  \fill[gray!25] (0,2) -- (2,0) -- (2,2) -- cycle;
  \draw[->] (0,0) -- (2.6,0) node[right] {$p_Y=\lambda_1$};
  \draw[->] (0,0) -- (0,2.6) node[above] {$p_Z=\lambda_2$};
  \draw (0,2) -- (2,0);
  \draw (2,0) -- (2,2) -- (0,2);
  \foreach \t in {1,2} {\draw (\t,0.05) -- (\t,-0.05) node[below] {\small $\t$};
                        \draw (0.05,\t) -- (-0.05,\t) node[left] {\small $\t$};}
  \fill (0,0) circle (1.5pt) node[below left] {\small VCG};
  \fill (1,1) circle (1.5pt) node[below left,xshift=2pt] {\small symmetric};
  \fill (2,2) circle (1.5pt) node[above right] {\small seller-optimal};
  \node at (1.45,1.45) {$\mathcal C=\Lambda^\star$};
\end{tikzpicture}
\caption{Series example: the core $\mathcal C$ (shaded) and the set of
dual-price outcomes $\Lambda^\star$ coincide. The segment $p_Y+p_Z=2$ is the
set of bidder-optimal core points; VCG lies outside.}
\label{fig:core}
\end{figure}



\end{example}

\begin{example}[A consumer earns a credit under netting]
\label{ex:credit}

Let us assume a line $1-2-3$, pipes $e_1=\{1,2\}$ and $e_2=\{2,3\}$ with reference orientations
$1\to2$ and $2\to3$ respectively, zero transfer costs, and netting convention.
$q^+_{e_1}=\bar q^-_{e_1}=10$ units (it will be uncongested), while  $\bar q^+_{e_2}=\bar q^-_{e_2}=1$. An open source at node $1$ sells gas at price $2$.
\begin{center}
\begin{tabular}{lllccc}
\toprule
bidder & inlet & sink & valuation $\unitval$ & source price & net value\\
\midrule
$A$ & $1$ & $3$ & $10$ & $2$ & $8$\\
$B$ & $1$ & $3$ & $8$ & $2$ & $6$\\
$C$ & $1$ & $3$ & $6$ & $2$ & $4$\\
$D$ & $3$ (own gas) & $1$ & $3$ & $0$  & $3$\\
\bottomrule
\end{tabular}
\end{center}
Each licence has quantity $1$. $D$ is a positioned shipper: it holds its own gas
at node $3$, whose acquisition cost is sunk, so its source price in the model is
$c^s_{3,D}=0$; the node-$3$ gas is \emph{not} an open source for $A,B,C$.
\end{example}

\begin{figure}[h]
\centering
\begin{tikzpicture}[>=Stealth,
   every node/.style={circle,draw,minimum size=7mm,inner sep=0pt}]
  \node (n1) {$1$};
  \node (n2) [right=28mm of n1] {$2$};
  \node (n3) [right=28mm of n2] {$3$};
  \draw[->] ([yshift=2pt]n1.east) -- node[above,draw=none,inner sep=1pt]{\small $A,B$: $2$ units} ([yshift=2pt]n2.west);
  \draw[<-] ([yshift=-2pt]n1.east) -- node[below,draw=none,inner sep=1pt]{\small $D$: $1$ unit} ([yshift=-2pt]n2.west);
  \draw[->] ([yshift=2pt]n2.east) -- node[above,draw=none,inner sep=1pt]{\small $A,B$: $2$ units} ([yshift=2pt]n3.west);
  \draw[<-] ([yshift=-2pt]n2.east) -- node[below,draw=none,inner sep=1pt]{\small $D$: $1$ unit} ([yshift=-2pt]n3.west);
  \node[draw=none,circle=false,rectangle,minimum size=0pt,anchor=south,align=center] at ([yshift=11mm]n1.north) {\small source, price $2$\\\small $D$'s sink};
  \node[draw=none,rectangle,minimum size=0pt,anchor=south,align=center] at ([yshift=11mm]n3.north) {\small sinks of $A,B,C$\\\small $D$'s gas};
  \node[draw=none,rectangle,minimum size=0pt,anchor=north] at ([yshift=-11mm]n2.south) {\small net flow on $e_2$: $2-1=1=\bar q^+_{e_2}$};
\end{tikzpicture}
\caption{Optimal gross flows. Under netting only the net forward flow
$2-1=1$ counts against the capacity of $e_2$.}
\label{fig:credit}
\end{figure}

\paragraph*{The winner-determination LP and its solution}

Write $a,b,c,d\in[0,1]$ for the accepted quantities. Source payments are
pass-throughs, so the welfare objective uses net values. Under netting, the
capacity constraint of $e_2$ applies to the \emph{net} forward flow, i.e.\ the
forward shipments of $A,B,C$ minus the backward shipment of $D$:
\begin{equation}\label{eq:wdp-credit}
\begin{aligned}
   \wval(I)=\max\ & 8a+6b+4c+3d\\
   \text{s.t.}\ & a+b+c-d\le 1\ \ (\lambda),\qquad 0\le a,b,c,d\le 1\ \ (\alpha_A,\dots,\alpha_D)
\end{aligned}
\end{equation}
(the backward-capacity constraint $d-(a+b+c)\le\bar q^-_{e_2}$ and the
constraints of $e_1$ never bind and are omitted).

Accepting $D$ is worth $3$ directly \emph{and} raises the room for forward
shipments from $1$ to $2$. The best use of two forward units is $A$ and $B$
($8+6$), so
\[
   a^\star=b^\star=d^\star=1,\qquad c^\star=0,\qquad
   \wval(I)=8+6+3=17 .
\]
(Without $D$, only one forward unit fits and it goes to $A$: welfare $8$.) The relevant
coalition values are as
\begin{center}
\begin{tabular}{l|ccccccccc}
\toprule
$S$ & $\{A\}$ & $\{B\}$ & $\{D\}$ & $\{A,B\}$ & $\{A,D\}$ & $\{B,D\}$ &
$\{A,B,C\}$ & $\{A,B,D\}$ & $I$\\
\midrule
$\wval(S)$ & 8 & 6 & 3 & 8 & 11 & 9 & 8 & 17 & 17\\
\bottomrule
\end{tabular}
\end{center}
Note that $\wval$ is not submodular: the marginal value of $B$ is $0$ when added
to $\{A\}$ but $6$ when added to $\{A,D\}$. $B$ and $D$ are complements, because
$D$'s counterflow creates the room that $B$ uses.

\paragraph*{The dual and all its optimal solutions}

The dual of \eqref{eq:wdp-credit} is
\begin{equation}\label{eq:dual-credit}
\begin{aligned}
\min\ & \lambda+\alpha_A+\alpha_B+\alpha_C+\alpha_D\\
\text{s.t.}\ & \alpha_A+\lambda\ge 8,\quad \alpha_B+\lambda\ge 6,\quad
              \alpha_C+\lambda\ge 4,\quad \alpha_D-\lambda\ge 3,\quad
              \lambda,\alpha\ge0 .
\end{aligned}
\end{equation}
Note the sign in $D$'s constraint: $d$ enters the capacity constraint with
coefficient $-1$. Complementary slackness gives:
\begin{enumerate}[label=(\roman*),leftmargin=2.2em]
\item $a^\star,b^\star,d^\star>0$: their constraints are tight, so
$\alpha_A=8-\lambda$, $\alpha_B=6-\lambda$, $\alpha_D=3+\lambda$. Nonnegativity of
$\alpha_B$ gives $\lambda\le 6$.
\item $c^\star=0<1$: $\alpha_C=0$, and feasibility for $c$ gives $\lambda\ge 4$.
\end{enumerate}
The dual objective is then
$\lambda+(8-\lambda)+(6-\lambda)+0+(3+\lambda)=17=\wval(I)$ for every
$\lambda\in[4,6]$. Hence
\begin{equation}\label{eq:lamset-credit}
   \lambda^\star\in[4,6]:
\end{equation}
the congestion price lies between the net value of the best \emph{rejected}
forward bidder ($C$, $4$) and that of the worst \emph{accepted} one ($B$, $6$),
the usual uniform-price interval. Without $D$ the optimum serves only
$A$ and the optimal duals are $\lambda\in[6,8]$ (between $B$ and $A$).

Without $D$ the marginal
forward unit is $B$'s, which is rejected, so $\lambda\ge 6$. With $D$, $B$ is
accepted and $C$ becomes the marginal rejected bidder, so the price range moves
down to $[4,6]$. More capacity (here created by counterflow) means a lower
scarcity price.

\paragraph*{Potentials and nodal prices}

In the netting dual of the general WDP, each physical pipe $a$ has one free
price gap $\tau_a=\pi_{i,v_a}-\pi_{i,u_a}$, common to all bidders, with
$\lambda^+_a-\tau_a\ge -c^{t,+}_a$ and $\lambda^-_a+\tau_a\ge -c^{t,-}_a$, and
complementary slackness on the net-flow variables $g^\pm_a$:
\begin{itemize}[leftmargin=1.4em]
\item Pipe $e_1$: net forward flow $g^+_{e_1}=2-1=1>0$, below capacity, so
$\lambda^+_{e_1}=0$ and $\tau_{e_1}=c^{t,+}+\lambda^+_{e_1}=0$.
\item Pipe $e_2$: net forward flow $g^+_{e_2}=1=\bar q^+_{e_2}$, so
$\tau_{e_2}=c^{t,+}+\lambda^+_{e_2}=\lambda$ with $\lambda\in[4,6]$; the backward
direction is slack, $\lambda^-_{e_2}=0$.
\end{itemize}
Normalising $\pi_{i,1}=0$ for each bidder gives $\pi_{i,2}=0$ and
$\pi_{i,3}=\lambda$ for all $i$. The nodal prices are
\[
\begin{aligned}
   \mu_A=\mu_B=\mu_C &= c^s_{1,\cdot}+\pi_{\cdot,3}-\pi_{\cdot,1}=2+\lambda,\\
   \mu_D &= c^s_{3,D}+\pi_{D,1}-\pi_{D,3}=0+0-\lambda=-\lambda .
\end{aligned}
\]
$D$'s unit travels against the reference orientation of $e_2$, so it picks up
$-\tau_{e_2}=-\lambda$: the reversed spread.

\paragraph*{Payments, decomposition and revenue}

\begin{center}
\renewcommand{\arraystretch}{1.2}
\begin{tabular}{l|c|ccc|c}
\toprule
bidder & payment $p_i=\mu_i$ & to supplier & to TSOs & capacity charge
 & utility $u_i$\\
\midrule
$A$ & $2+\lambda$ & 2 & 0 & $\lambda$ & $10-(2+\lambda)=8-\lambda$\\
$B$ & $2+\lambda$ & 2 & 0 & $\lambda$ & $8-(2+\lambda)=6-\lambda$\\
$C$ & 0 (rejected) & 0 & 0 & 0 & 0\\
$D$ & $-\lambda$ & 0 & 0 & $-\lambda$ & $3-(-\lambda)=3+\lambda$\\
\midrule
total & & 4 & 0 & $\lambda$ & \\
\bottomrule
\end{tabular}
\end{center}
With $\lambda=4$: $A$ and $B$ each pay $6$ (of which $4$
is the capacity charge), and $D$ is paid $4$, so its utility is $3+4=7$.

\paragraph*{Revenue adequacy} The auctioneer keeps the capacity charges:
$u_0=\lambda+\lambda-\lambda=\lambda$. This equals the congestion rent
$\lambda^+_{e_2}\bar q^+_{e_2}=\lambda\cdot 1$, as the dual-pricing rule
guarantees in general. For $\lambda=4$: $4+4-4=4$. The credit to $D$ is financed
out of the forward shippers' capacity charges; supplier and TSO payments are
untouched.

\paragraph*{Feasibility} $u_0+\sum_i u_i=\lambda+(8-\lambda)+(6-\lambda)+0+
(3+\lambda)=17=\wval(I)$ for every $\lambda\in[4,6]$.

\paragraph*{The core}

For each $\lambda\in[4,6]$ the dual-price payoffs are
$u=(8-\lambda,\,6-\lambda,\,0,\,3+\lambda)$, $u_0=\lambda$.
All $16$ core inequalities hold for every $\lambda\in[4,6]$,
as theorem \ref{thm:dual-core} for dual pricing predicts. At $\lambda=4$ the coalitions $\{0,A\}$ and
$\{0,B\}$ are tight.

\paragraph*{Comparison with VCG} Here VCG also pays $D$, and more. The VCG
capacity charges are
\[
\begin{aligned}
CC_A&=\wval(\{B,C,D\})-(17-8)=13-9=4,\\
CC_B&=\wval(\{A,C,D\})-(17-6)=15-11=4,\\
CC_C&=\wval(\{A,B,D\})-17=0,\\
CC_D&=\wval(\{A,B,C\})-(17-3)=8-14=-6 .
\end{aligned}
\]
So $D$ receives $6$ under VCG, but the auctioneer collects only $4+4-6=2$, and
the coalition $\{0,A\}$ blocks: $u_0+u_A=2+(8-4)=6<8=\wval(\{A\})$ (so do
$\{0,B\}$ and $\{0,C\}$). This is the complementarity between $B$ and $D$: VCG rewards $D$ with its full marginal contribution, which
includes $B$'s value, and the total is too generous. Under the dual prices $D$
receives $\lambda\in[4,6]$, and the auctioneer keeps the congestion rent.

\section{Discussion}\label{sec:discussion}

The results presented above leave several design choices open, and we discuss
the most important of them in turn.

A first set of questions concerns incentives. Core stability and
dominant-strategy incentive compatibility cannot, in general, be obtained
simultaneously: VCG secures the latter but, as Example~\ref{ex:series} shows,
not necessarily the former, whereas dual pricing secures the core at the price
of dominant-strategy incentive compatibility. Under the payment
rule~\eqref{eq:lmp}, truthful bidding is optimal for a price-taking bidder,
that is, for a bidder who treats the nodal prices $\mu^\star$ as given. This is
the natural benchmark in markets with many participants, but a large bidder
that is pivotal on a congested edge can, in principle, influence the dual
prices through its bids. When the dual is degenerate and the payment vector is
therefore not unique (Remark~\ref{rem:degen}), selecting the bidder-optimal core
point \citep{DayMilgrom2008,day2012quadratic} minimises the residual incentive
to deviate and is the natural refinement of the rule. Whether such a canonical
core selection can be obtained in closed form from the duals of the
winner-determination problem, rather than by solving a secondary optimisation
problem, remains an open question.

A second design choice concerns counter-directed flows. Throughout the paper we
have carried two capacity conventions, netting and separate directed arcs. In
the restricted model the two conventions coincide
(Corollary~\ref{cor:netting-inert}), so the choice is immaterial there. In the
general model, netting yields weakly higher welfare but admits negative
payments, that is, credits for counterflow (Example~\ref{ex:credit}), whereas
the directed-arc convention excludes such credits at some welfare cost. The
latter is arguably closer to regulatory practice, since EU entry--exit capacity
is booked per directed point. As the nodal pricing rule is well defined under
both conventions, we make the choice explicit rather than resolve it, and leave
it to the regulator.

The scope of the mechanism is determined by the nature of the transported good.
Replacing route enumeration by routing within the winner-determination problem
is sound because gas is fungible: any unit delivered to a bidder's sink is as
good as any other, so the mechanism is free to choose and split paths. Where
the good is an intrinsically rigid bundle -- a train path occupying a specific
sequence of track segments and time slots \citep{Stojadinovic2019}, or an
all-or-nothing wavelength -- the relaxation underlying the OD language is no
longer exact: the route bundle constitutes a genuine complementarity, the
winner-determination problem acquires integrality constraints, and both the
exactness of the dual prices and the core-stability guarantee may be lost. Our
results are therefore sharpest for divisible and substitutable transport
services such as gas or bandwidth.

Finally, the two shortcomings of the convex combinatorial auction of
\citet{Csercsik2022} addressed in this paper share a common origin. Route-based
bidding hard-codes bundles that the mechanism should be free to form or
dissolve: a route is an indivisible combination of edges, and an
inlet-specific bid fixes the origin of the gas. The OD language dissolves the
route bundle and thereby closes the efficiency gap illustrated in
Example~\ref{ex:diamond}. Departures from source indifference, whether through
dedicated sources or through inlet-specific caps, may expose a residual
complementarity between bidders; these are precisely the situations in which
VCG may leave the core, and in which the flow-conservation duals restore core
stability.

\section{Conclusion and future work}\label{sec:conclusion}

This paper analyses the problem of transport capacity allocation in
capacity-constrained networks through auctions in which participants bid for
delivery rather than for individual capacity products. Addressing the flaws of the convex
combinatorial auction framework presented in \citet{Csercsik2022}, we showed that requiring
bidders to enumerate their routes in advance may lead to strictly inefficient
allocations, and we proposed an origin--destination bidding language in which
bidders specify where gas may enter and where it is to be delivered, while the
routing is determined by a network-flow linear program. Because this
winner-determination problem remains a linear program of polynomial size, the
language attains the welfare of complete route enumeration without its
combinatorial burden.

On the payment side, we identified source indifference as the boundary of a
positive result. When sources are openly accessible at uniform prices and
bidders value only the quantity delivered to their own sink, some optimal
aggregate flows never use a pipe in both directions, the winner-determination
problem reduces to a single-commodity flow, and an elementary argument based on
node potentials shows that the coalition value is submodular in the set of
bidders; consequently, the VCG outcome lies in the core. Once bidders face
consumer-specific access to sources or inlet-specific quantity caps, this
property may fail and VCG payments may lie strictly outside the core. For this
general setting we proposed a dual (nodal) pricing rule derived from the shadow
prices of the flow-conservation constraints, and proved that every optimal dual
yields a core outcome under both capacity conventions considered, with the
auctioneer's revenue equal to the congestion rent of the binding capacity
constraints.

\subsection{Future work}

Several directions remain open. First, the analysis has been illustrated on
stylised examples; applying the mechanism to a realistic network, such as the
European gas transmission system, would require the estimation of inverse
demand functions and would allow a quantitative comparison with current
allocation practice. Second, the incentive properties of dual pricing deserve
further study, including the extent to which large bidders can influence nodal
prices and the computation of a canonical core point. In addition, the present model
treats consumers as the only strategic participants; including producers on the
supply side leads to a combinatorial double auction over origin--destination
licences, which may be considered as a natural extension of the current work.
Furthermore, the model used in the current work assumed a steerable medium, an assumption that does not hold for electricity-type infrastructure networks, where the physical properties of the network elements determine the resulting flows for a given input-output pattern. Studying the problem of auction-based capacity allocation in Kirchoff-type networks can also mean a plausible continuation of the current wok.

\section*{Acknowledgements}
D\'{a}vid Csercsik thanks Robert Day for discussions about the flaws and limitations of the
original convex combinatorial auction \citep{Csercsik2022} during the Dagstuhl 2025 Seminar series.

\section*{Funding}
Gergely Csáji was supported by the Ministry of Innovation and Technology of Hungary from the National
Research, Development and Innovation Fund, by grants STARTING 153093 and FK 145838.

\section*{CRediT authorship contribution statement}
\textbf{Gergely Cs\'aji:} Methodology, Formal analysis, Validation, Writing - Original Draft, Writing - Review \& Editing, Visualization. \textbf{D\'avid Csercsik:} Conceptualization, Methodology, Validation, Writing - Original Draft, Writing - Review \& Editing.

\section*{Declaration of competing interest}
The authors declare that they have no known competing financial interests or
personal relationships that could have appeared to influence the work reported
in this paper.

\section*{Declaration of generative AI and AI-assisted technologies in the manuscript preparation process}
During the preparation of this work the authors used Claude (Anthropic) to support literature research, language improvement and formatting of the manuscript. After using this tool, the authors reviewed and edited the
content as needed and take full responsibility for the content of the
published article.

\bibliographystyle{apacite}
\bibliography{references}

\end{document}